\theoremstyle{definition}
\theoremstyle{plain}\newtheorem{theorem}{Theorem}
\theoremstyle{definition}
\theoremstyle{definition}
\theoremstyle{plain}
\theoremstyle{plain}\newtheorem{corollary}[theorem]{Corollary}
\theoremstyle{plain}\newtheorem{lemma}[theorem]{Lemma}
\begin{document}
\begin{frontmatter}
\title{Exact quantum algorithms have advantage for almost all Boolean functions}
\author{Andris Ambainis $^{2,3}$}
\author{Jozef Gruska$^{1}$}
\author{Shenggen Zheng$^{1,}$\corref{one}}

 \cortext[one]{Corresponding author.\\ \indent{\it E-mail addresses:} andris.ambainis@lu.lv, gruska@fi.muni.cz, zhengshenggen@gmail.com}

\address{
  $^{1}$Faculty of Informatics, Masaryk University, Brno 60200, Czech Republic\\

  $^2$  Faculty of Computing, University of Latvia,R\=\i ga, LV-1586, Latvia\\

  $^3$ School of Mathematics, Institute for Advanced Study, Princeton, NJ 08540, USA\\
}

\begin{abstract}
It has been proved that  almost  all $n$-bit Boolean functions have  {\em exact classical  query complexity} $n$.
However, the situation  seemed to be very different when we deal with   {\em  exact quantum query complexity}.
In this paper, we prove that  almost  all $n$-bit Boolean functions can be computed by an exact quantum algorithm with less than $n$ queries.
More exactly, we prove that $\mbox{AND}_n$ is the only  $n$-bit Boolean function, up to isomorphism, that requires $n$ queries.
\end{abstract}

\begin{keyword}
Quantum computing\sep Quantum query complexity \sep Boolean function  \sep Symmetric  Boolean function \sep Monotone  Boolean function \sep Read-once Boolean function

\end{keyword}

\end{frontmatter}

\section{Introduction}

{\em  Quantum query complexity} is the quantum generalization of classical {\em  decision tree complexity}. In this complexity model, an algorithm is charged for ``queries" to the input bits, while any intermediate computation is considered as free (see \cite{BdW02}). For many functions one can obtain large quantum speed-ups
in this model in the case algorithms are allowed a constant small probability of error (bounded error). As the most famous example, Grover's algorithm \cite{Gro96} computes the $n$-bit $\mbox{OR}$ function
with $O(\sqrt {n})$ queries in the bounded error mode, while any classical (also exact quantum) algorithm needs
$\Omega(n)$ queries. More such cases of polynomial
speed-ups are known,  see\cite{Amb07,Bel12,DHHM06}.
For {\em partial functions},
even an exponential speed-up is possible, in case quantum resources are used, see \cite{Shor97,Sim97}.
In  the bounded-error setting, quantum complexity is now relatively well understood. The model of  {\em exact quantum query complexity}, where the algorithms must output the correct answer with
certainty for every input, seems to be more intriguing. It is much more difficult to come up with  exact quantum algorithms that outperform, concerning number of queries,  classical exact algorithms.

Though for partial functions exact quantum algorithms with
exponential speed-up are known (for instance in \cite{AmYa11,BH97,DJ92,GQZ14,ZQ14,GQZ14b,Zhg13}), the results for total functions have been much less spectacular: the best known quantum
speed-up was just by a factor of 2 for many years \cite{CEMM98,FGGS98}. Recently, in a breakthrough result, Ambainis \cite{Amb13} has
presented the first example of a Boolean function $f:\{0,1\}^n\to \{0,1\}$ for which exact quantum algorithms have superlinear advantage over exact classical algorithms.

In  exact classical  query complexity ({\em decision tree complexity}, {\em deterministic query complexity}) model, almost all $n$-bit Boolean functions require $n$  queries \cite{BdW02}. However,  the situation  seemed very different for the case of  exact quantum complexity.
  Montanaro et al.~\cite{MJM11}  proved that  $\mbox{AND}_3$ is the only $3$-bit Boolean function, up to isomorphism, that requires 3 queries and using the semidefinite programming approach,
they numerically\footnote{In their numerical experiments, computation providing correct
result with a probability greater than 0.999 is treated as exact.} demonstrated that  all $4$-bit Boolean functions, with the exception of functions
isomorphic to the $\mbox{AND}_4$ function, have  exact quantum query algorithms using at most 3
queries. They also listed their numerical results for all symmetric Boolean functions
on 5 and 6  bits, up to isomorphism.

In 1998, Beals at al. \cite{BBC+98}  proved,  for any $n$, that $\mbox{AND}_n$ has
exact quantum complexity $n$. Since that time it was an interesting
problem whether $\mbox{AND}_n$ is the only $n$-bit Boolean function, up to
isomorphism, that has  exact quantum complexity  $n$. In
this paper we  approve that this is indeed the case. As a corollary we
get that almost all $n$-bit Boolean functions have  exact quantum
complexity less than $n$.

We prove our main results in four stages. In the first one we give the
proof  for symmetric Boolean functions, in the second one for monotone
Boolean functions and in the third one for the case of read-once
Boolean functions. On this basis we prove in the fourth stage the
general case.
In all four cases proofs used quite different approaches. They
are expected to be of a broader interest since all these special
classes of Boolean functions are of broad interest.

The paper is organized as follows. In Section~2 we introduce some notation concerning Boolean function and query complexity.
In Section~3 we investigate symmetric Boolean functions. In section~4 we investigate monotone Boolean functions. In section~5 we investigate read-once Boolean functions. In Section~6 we prove our main result.
Finally, Section~7 contains a conclusion.

\section{Preliminaries}
We introduce some basic needed  notation in this section. See also \cite{Gru99,NC00} for details on quantum computing and see \cite{BdW02,BBC+98,NS94} for more  on query complexity models and {\em multilinear polynomials}.

\subsection{Boolean functions}
An $n$-bit Boolean function is a function $f:\{0,1\}^n\to \{0,1\}$. We say $f$ is total if $f$ is defined on all inputs. For an input $x\in\{0,1\}^n$, we use $x_i$ to denote its $i$-th bit, so $x=x_1x_2\cdots x_n$. Denote $[n]=\{1,2,\ldots,n\}$.  For $i\in[n]$, we write
\begin{equation}
f_{x_i=b}(x)=f(x_1,\ldots,x_{i-1},b,x_{i+1},\ldots,x_n),
\end{equation}
which is an $(n-1)$ bit Boolean function.
For any $i\in[n]$, we have
\begin{equation}\label{Eq-df(x)}
    f(x)=(1-x_i)f_{x_i=0}(x)+x_if_{x_i=1}(x).
\end{equation}

We say that two Boolean functions $f$ and $g$ are {\em query-isomorphic} (by convenience,  isomorphic will mean query-isomorphic in this paper) if they are equal up to negations
and permutations of the input variables, and negation of the output variable.  This relationship is sometimes known as NPN-equivalence \cite{MJM11}.

We will use the sign $(\neg)$ for a possible negation. For example, $\mbox{AND}((\neg)x_1,x_2)$ can denote $x_1\wedge x_2$ or $\neg x_1\wedge x_2$.
We use $|x|$ to denote the Hamming weight of $x$ (its number of 1's).

\vspace*{12pt}
\noindent
{\bf Definition~1:}
We call a Boolean function $f:\{0,1\}^n\to \{0,1\}$ symmetric if $f(x)$  depends only on $|x|$.

\vspace*{12pt}
\noindent

  An $n$-bit symmetric Boolean function $f$ can be fully described by a vector $(b_0,b_1,\ldots,b_n)\linebreak[0]\in\{0,1\}^{n+1}$, where $f(x)=b_{|x|}$, i.e. $b_k$ is the value of $f(x)$ for $|x|=k$ \cite{ZGR97}.

For $x,y\in\{0,1\}^n$, we will write $x\preceq y$ if $x_i\leq y_i$ for all $i\in[n]$. We will write $x\prec y$ if $x\preceq y$ and $x\neq y$.

\vspace*{12pt}
\noindent
{\bf Definition~2:}
We call a Boolean function  $f:\{0,1\}^n\to \{0,1\}$   monotone   if  $f(x)\leq f(y)$ holds whenever $x\preceq y$.
\vspace*{12pt}
\noindent

Monotonic Boolean functions are precisely those that can be defined by an expression combining the input bits (each of them may appear more than once) using only the operators $\wedge$ and $\vee$ (in particular $\neg$ is forbidden).
Monotone Boolean functions have many nice properties. For example  they have a unique prime
conjunctive normal form (CNF)  and a unique prime disjunctive normal form (DNF) in which no negation occurs \cite{EMG08}.

 Let $f:\{0,1\}^n\to \{0,1\}$ be a monotone Boolean function, $f$ has a prime CNF
\begin{equation}
    f(x)=\bigwedge_{I\in C}\bigvee_{i\in I} x_i,
\end{equation}
 where  $C$ is the set of some $I\subseteq[n]$.
Similarly,  $f$ has a prime DNF
\begin{equation}
    f(x)=\bigvee_{J\in D}\bigwedge_{j\in J} x_j,
\end{equation}
where $D$ is
the set of some $J\subseteq[n]$.

\vspace*{12pt}
\noindent
{\bf Definition~3:}
A read-once Boolean function is a Boolean function that can be represented by  a Boolean formula in which each variable appears exactly once.
\vspace*{12pt}
\noindent

For example $f(x_1,x_2,x_3)=(x_1\vee x_2)\wedge (\neg x_3)$ is a $3$-bit read-once Boolean function and $f'(x_1,x_2,x_3)=(x_1\vee x_2)\wedge (\neg x_1\vee \neg x_3)$ is not read-once.

A Boolean formula over
the standard basis $\{\wedge,\vee,\neg \}$ can be represented by a binary tree where each internal node is labeled
with $\wedge$ or $\vee$, and each leaf is labeled with a literal, that is, a Boolean variable
or its negation. The size of a formula is the number of leaves.

\vspace*{12pt}
\noindent
{\bf Definition~4:}
 The formula
size of a Boolean function $f$, denoted $L(f)$, is the size of the smallest formula which computes $f$.
\vspace*{12pt}
\noindent

A read-once Boolean function is a function $f$ such that $L(f)=n$ and $f$ depends on all of its $n$ variables.

\subsection{Exact query complexity models}

An exact classical (deterministic) query algorithm for computing a Boolean function $f:\{0,1\}^n\to \{0,1\}$ can be described
by a decision tree. A decision tree $T$ is a rooted binary tree where each internal vertex
has exactly two children, each internal vertex is labeled with a variable $x_i$ and each leaf is labeled with a value 0 or 1. $T$ computes a Boolean function $f$ as follows: Start at the root. If this is a leaf then stop and the output of the tree is the value of the leaf.  Otherwise, query the variable $x_i$ that labels the root. If $x_i=0$, then recursively evaluate the left subtree, if $x_i=1$ then recursively evaluate the right subtree. The output of the tree is the value of the leaf that is reached at the end of this
process.
The depth
of $T$ is the maximal length of a path from the root to a leaf (i.e. the worst-case number of queries
used on any input). The {\em exact classical query complexity} (deterministic
query complexity, decision tree complexity) $D(f)$ is the minimal depth over all decision trees computing $f$.

Let $f:\{0,1\}^n\to \{0,1\}$ be a Boolean function and $x = x_1x_2\cdots x_n$ be an input
bit string.   Each exact quantum query algorithm for $f$
works in a Hilbert space with some fixed basis, called standard. It starts in a
fixed starting state, then performs on it a sequence of  transformations
$U_1$, $Q$, $U_2$, $Q$, \ldots, $U_t$, $Q$, $U_{t+1}$.
Unitary transformations $U_i$ do not depend on
the input bits, while $Q$, called the {\em query transformation}, does,
in the following way. Each of the basis states corresponds to either one or none
of the input bits. If the basis state $|\psi\rangle$ corresponds to the $i$-th
input bit, then $Q|\psi\rangle=(-1)^{x_i}|\psi\rangle$. If it does not correspond to any
input bit, then $Q$ leaves it unchanged: $Q|\psi\rangle=|\psi\rangle$. Finally, the algorithm performs a  measurement in the standard basis.
Depending on the result of the measurement, the algorithm outputs either 0 or 1
which must be equal to $f(x)$. The {\em  exact quantum query complexity}
$Q_E(f)$ is the minimum number of queries used by any quantum algorithm which
computes $f(x)$ exactly for all $x$.

Note that if Boolean functions $f$ and $g$ are isomorphic, then $D(f)=D(g)$ and $Q_E(f)=Q_E(g)$.

According to Eq.~(\ref{Eq-df(x)}), if we  query $x_i$ first, suppose that $x_i=b$, then we can compute $f_{x_i=b}(x)$ further. Therefore, for any $i\in[n]$, we have
\begin{equation}\label{Eq-n-1ton}
Q_E(f)\leq \max\{Q_E(f_{x_i=0}),Q_E(f_{x_i=1})\}+1.
\end{equation}

\subsection{Some special functions and their  exact quantum query complexity}

Symmetric, monotone and read-once Boolean functions were  well studied in query complexity \cite{BdW02}. The well known Grover's algorithm \cite{Gro96} computes  $\mbox{OR}_n$, which is symmetric, monotone and read-once.  Read-once functions are also well investigated \cite{BS04,SW86,San95}.

Some symmetric functions and their exact quantum query complexity that we will refer to in this paper are as follows:
\begin{enumerate}
\item $\mbox{OR}_n(x)=1$  iff $|x|\geq 1$.  $Q_E(\mbox{OR}_n)=n$ \cite{BBC+98}.
  \item $\mbox{AND}_n(x)=1$  iff $|x|=n$. $Q_E(\mbox{AND}_n)=n$ \cite{BBC+98}.
  \item $\mbox{PARITY}_n(x)=1$ iff $|x|$ is odd. $Q_E(\mbox{PARITY}_n)=\lceil\frac{n}{2}\rceil$ \cite{CEMM98,FGGS98}.
  \item $\mbox{EXACT}_n^{k}(x)=1$  iff $|x|=k$.  $Q_E(\mbox{EXACT}_n^{k})=\max\{k,n-k\}$ \cite{AISJ13}.
  \item  $\mbox{Th}_{n}^{k}(x)=1$ iff $|x|\geq k$. $Q_E(\mbox{Th}_n^{k})=\max\{k,n-k+1\}$  \cite{AISJ13}.
\end{enumerate}

$\mbox{OR}_n$ is isomorphic to $\mbox{AND}_n$ since
 \begin{equation}
 \neg\mbox{OR}_n(\neg x_1,\neg x_2,\ldots,\neg x_n)=\mbox{AND}_n(x_1,x_2,\ldots,x_n).
  \end{equation}

Some other functions and their exact quantum query complexity that we will refer to in this paper are as follows:
\begin{enumerate}
  \item  $\mbox{NAE}_{n}(x)=1$ iff there exist $i,j$ such that $x_i\neq x_j$.  $Q_E(\mbox{NAE}_{n})\leq n-1$.
  \item $f(x_1,x_2,x_3)=x_1\wedge(x_2\vee x_3)$. Its exact quantum  query complexity is 2 \cite{MJM11}.
  \end{enumerate}
   It is easy to prove that $Q_E(\mbox{NAE}_{n})\leq n-1$ since
   \begin{equation}
   \mbox{NAE}_{n}(x_1,\ldots,x_n)=(x_1\oplus x_2)\vee(x_2\oplus x_3)\cdots\vee(x_{n-1}\oplus x_n).
\end{equation}

\subsection{Multilinear polynomials}

 Every Boolean function $f:\{0,1\}^n\to \{0,1\}$ has  a unique representation as an $n$-variate multilinear polynomial  over the reals, i.e., there exist real coefficients $a_S$ such that
 \begin{equation}
 f(x_1,\ldots,x_n)=\sum_{S\subseteq [n]} a_S \prod_{i\in S} x_i.
 \end{equation}
  The degree of $f$  is the degree of its largest monomial:
 $deg(f)=\max\{|S|:a_S\neq 0\}$.

  For example, $\mbox{AND}_2(x_1,x_2)=x_1\cdot x_2$ and $\mbox{OR}_2(x_1,x_2)=x_1+x_2-x_1\cdot x_2$.

 $\textrm{deg}(f)$ gives a lower bound on $D(f)$. Indeed, it holds

\begin{lemma} {\em \cite{BdW02}}\label{D(f)-geq-Deg(f)}
$D(f)\geq \textrm{deg}(f)$.
\end{lemma}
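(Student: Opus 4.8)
The plan is to prove $D(f) \geq \deg(f)$ by exhibiting, from any decision tree $T$ computing $f$ of depth $d = D(f)$, a multilinear polynomial representation of $f$ of degree at most $d$; uniqueness of the multilinear representation then forces $\deg(f) \le d$. The main idea is that a decision tree partitions the cube $\{0,1\}^n$ into regions labeled by root-to-leaf paths, and each such region is an affine subcube fixing only the variables queried along that path.

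First I would fix a leaf $\ell$ of $T$ and let $P_\ell$ be the path from the root to $\ell$. Along $P_\ell$ a set $S_\ell \subseteq [n]$ of variables is queried, with $|S_\ell| \le d$, and each $x_i$ for $i \in S_\ell$ is constrained to a fixed bit value $b_i \in \{0,1\}$ (whether one branches left or right). Define the indicator polynomial
\begin{equation}
p_\ell(x) = \prod_{i \in S_\ell} \bigl( b_i x_i + (1-b_i)(1-x_i) \bigr),
\end{equation}
which is a multilinear polynomial of degree at most $|S_\ell| \le d$, and which evaluates to $1$ on exactly those $x \in \{0,1\}^n$ whose computation in $T$ reaches $\ell$, and to $0$ on all other Boolean inputs. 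Then I would set
\begin{equation}
p(x) = \sum_{\ell : \text{value}(\ell) = 1} p_\ell(x),
\end{equation}
summing over the leaves labeled $1$. Since every Boolean input follows exactly one path and reaches exactly one leaf, $p(x) = f(x)$ for all $x \in \{0,1\}^n$. Expanding $p$ and collecting monomials gives a multilinear polynomial (each $p_\ell$ is already multilinear, and a sum of multilinear polynomials is multilinear) of degree at most $d$ agreeing with $f$ on the Boolean cube.

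Finally, I would invoke the uniqueness of the multilinear polynomial representation of a Boolean function stated earlier in the excerpt: the polynomial $p$ constructed above must coincide with the unique multilinear polynomial of $f$, hence $\deg(f) = \deg(p) \le d = D(f)$. I do not expect a genuine obstacle here; the only point requiring care is the bookkeeping that the leaf-indicator polynomials $p_\ell$ partition the Boolean cube — i.e. that $\sum_\ell p_\ell(x) = 1$ pointwise on $\{0,1\}^n$ and that distinct leaves have disjoint support — which follows immediately from the deterministic, exhaustive branching structure of a decision tree. One subtlety worth a sentence is that $p_\ell$ is defined as a product over only the queried variables $S_\ell$, not all of $[n]$; this is exactly what keeps the degree bounded by the depth rather than by $n$.
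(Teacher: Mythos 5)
Your proof is correct and is exactly the standard argument behind this lemma, which the paper itself does not prove but simply cites from the survey \cite{BdW02}: sum the degree-$\le d$ leaf-indicator polynomials of the $1$-leaves of an optimal decision tree and invoke uniqueness of the multilinear representation. No gaps worth noting (at most one could remark that a variable may be assumed, without loss of generality, to be queried at most once along any root-to-leaf path, so each $p_\ell$ is indeed multilinear).
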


\section{Symmetric Boolean functions}

 \begin{theorem}
Let $f:\{0,1\}^n\to \{0,1\}$ be a symmetric Boolean function. $Q_E(f)=n$ iff $f$ is isomorphic to $\mbox{AND}_n$.
\end{theorem}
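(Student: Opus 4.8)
The plan is to prove the contrapositive of the "only if" direction: if $f$ is a symmetric Boolean function on $n$ bits that is not isomorphic to $\mathrm{AND}_n$, then $Q_E(f) \le n-1$. The "if" direction is immediate from $Q_E(\mathrm{AND}_n)=n$ (Beals et al.) together with the fact that isomorphic functions have equal exact quantum query complexity. So everything reduces to an upper bound for non-$\mathrm{AND}_n$ symmetric functions. I would describe $f$ by its value vector $(b_0,b_1,\dots,b_n)$, and recall the exact quantum query complexities of the basic symmetric functions listed in the preliminaries: $\mathrm{PARITY}_n$ costs $\lceil n/2\rceil$, $\mathrm{EXACT}_n^k$ costs $\max\{k,n-k\}$, $\mathrm{Th}_n^k$ costs $\max\{k,n-k+1\}$, and $\mathrm{NAE}_n$ costs at most $n-1$. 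The key structural observation is that, up to isomorphism (negating the output, and simultaneously negating all inputs, which reverses the vector), we may normalize the vector so that, say, $b_0 = 0$.

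The main step is a case analysis on how "spread out" the value vector is. If $f$ is constant, $Q_E(f)=0$. If $f$ depends on a bounded number of "transitions" in the vector $(b_0,\dots,b_n)$, I expect to peel off queries one at a time using inequality~(\ref{Eq-n-1ton}): $Q_E(f) \le \max\{Q_E(f_{x_i=0}), Q_E(f_{x_i=1})\} + 1$, noting that for a symmetric $f$ the restrictions $f_{x_i=0}$ and $f_{x_i=1}$ are again symmetric, on $n-1$ bits, with value vectors $(b_0,\dots,b_{n-1})$ and $(b_1,\dots,b_n)$ respectively. So one wants to argue inductively that unless the vector is exactly $(0,0,\dots,0,1)$ (i.e.\ $\mathrm{AND}_n$) or $(1,0,0,\dots,0)$ (i.e.\ $\mathrm{OR}_n$, isomorphic to $\mathrm{AND}_n$), at least one of the two restrictions already has complexity $\le n-2$, giving $Q_E(f)\le n-1$. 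The delicate cases are those near the boundary: functions like $\mathrm{EXACT}_n^1$, $\mathrm{Th}_n^{n-1}$, $\mathrm{NAE}_n$, and functions whose vector has exactly two ones at the extreme ends — here I would invoke the explicit complexities above ($\max\{k,n-k\}\le n-1$ for $1\le k\le n-1$, etc.) or the $\mathrm{NAE}_n$ bound directly rather than recursing, since the naive recursion might lose too much.

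The part I expect to be the main obstacle is making the induction airtight at the base and near-base cases: one must be careful that reducing to an $(n-1)$-bit symmetric function never forces us into an isomorphic copy of $\mathrm{AND}_{n-1}$ on \emph{both} branches simultaneously, which would only give $Q_E(f)\le n$ and not beat the trivial bound. Concretely, if $f$ has value vector $(0,\dots,0,1,1)$ (the function $\mathrm{Th}_n^{n-1}$, "at least $n-1$ ones"), then $f_{x_i=0}$ has vector $(0,\dots,0,1)=\mathrm{AND}_{n-1}$ and $f_{x_i=1}$ has vector $(0,\dots,0,1,1)=\mathrm{Th}_{n-1}^{n-2}$; the recursion alone stalls, so one must instead cite $Q_E(\mathrm{Th}_n^{n-1})=\max\{n-1,2\}=n-1$. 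I would therefore organize the proof as: (i) handle all $f$ whose normalized vector is not "almost all zeros with a single trailing one" by one or two rounds of the query-peeling inequality reducing to a strictly smaller non-$\mathrm{AND}$ symmetric function and then inducting; (ii) handle the finitely many "boundary-type" families ($\mathrm{EXACT}$, $\mathrm{Th}$, $\mathrm{PARITY}$, $\mathrm{NAE}$) by the explicit formulas already recorded in Section~2; (iii) conclude that the only surviving case is $\mathrm{AND}_n$ itself (and its isomorph $\mathrm{OR}_n$), which indeed needs $n$ queries.
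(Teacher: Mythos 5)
Your proposal is correct and follows essentially the same route as the paper: induction on $n$ using $Q_E(f)\leq\max\{Q_E(f_{x_i=0}),Q_E(f_{x_i=1})\}+1$ applied to the prefix/suffix value vectors of the restrictions, with the stalled boundary cases (a restriction isomorphic to $\mbox{AND}_{n-1}$, i.e.\ the $\mbox{Th}^{2}$, $\mbox{Th}^{n-1}$, $\mbox{EXACT}^{1}$, $\mbox{EXACT}^{n-1}$, $\mbox{NAE}$ families enumerated in the paper's Table~2) settled by the known exact complexities from Section~2. Just tighten the wording of the inductive step: you need \emph{both} restrictions to have complexity at most $n-2$ (which is exactly what the induction hypothesis gives when neither the prefix nor the suffix vector is of AND type), not merely ``at least one''.
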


\begin{proof}
If $f$ is isomorphic to $\mbox{AND}_n$, then $Q_E(f)=n$ \cite{BBC+98}.

An $n$-bit symmetric Boolean function  can be fully described by a vector $(b_0,b_1,\ldots,b_n)\in\{0,1\}^{n+1}$, where $f(x)=b_{|x|}$, i.e. $b_k$ is the value of $f(x)$ for $|x|=k$.

\begin{table}[http]
\caption{Exact quantum query complexity for $3$-bit symmetric functions.}\label{T1}      
\begin{center}
\begin{tabular}{|l|l|l|}

  \hline
  $(b_0,b_1,b_2,b_3)$  &Type of function& Query complexity\\
  \hline
  0 0 0 0  & Constant function & 0\\
  \hline
  0 0 0 1  & $\mbox{AND}_3$ & 3\\
  \hline
  0 0 1 0  &$\mbox{EXACT}_3^{2}$ & 2\\
  0 0 1 1  &$\mbox{Th}_3^{2}$ & 2\\
  0 1 0 0 &$\mbox{EXACT}_3^{1}$ & 2 \\
  0 1 0 1  & $\mbox{PARITY}_3$ & 2\\
  0 1 1 0 & $\mbox{NAE}_3$ & 2 \\
  \hline
  0 1 1 1  & Isomorphic to $\mbox{AND}_3$& 3\\
  \hline
  1 0 0 0  & Isomorphic to $\mbox{AND}_3$& 3\\
  \hline
  1 0 0 1  & Isomorphic to $\mbox{NAE}_3$& 2\\
  1 0 1 0  &  Isomorphic to $\mbox{PARITY}_3$& 2\\
  1 0 1 1  & Isomorphic to $\mbox{EXACT}_3^{1}$& 2\\
  1 1 0 0  & Isomorphic to $\mbox{Th}_3^{2}$ & 2\\
  1 1 0 1  & Isomorphic to $\mbox{EXACT}_3^{2}$& 2\\
  \hline
  1 1 1 0 & Isomorphic to $\mbox{AND}_3$& 3\\
  \hline
  1 1 1 1  & Constant function & 0\\
  \hline
\end{tabular}
\end{center}
\end{table}

 Table~\ref{T1} contains all 3-bit Boolean
functions and their exact quantum query complexity.
Four 3-bit  Boolean functions that achieve 3 queries are those that can be  described by one of the following vectors:  $(0,0,0,1),(0,1,1,1),(1,0,0,0), (1,1,1,0)$. They are isomorphic to  $\mbox{AND}_3$.

We claim  that  only $n$-bit  Boolean functions that can be described by one of the following vectors $(0,\ldots,0,1),\linebreak[0](0,1,\ldots,1),\linebreak[0](1,0,\ldots,0),\linebreak[0](1,\ldots,1,0)$, which are isomorphisms of $\mbox{AND}_n$,   that can achieve $n$ queries.
We prove this claim by an induction on $n$ as follows:

{\bf BASIS}: The result holds clearly for $n=3$.

{\bf INDUCTION}: Suppose the result holds for $n=k$ ($\geq 3$). We will prove that the result holds also for $n=k+1$.
We use vector $(b_0,b_1,\ldots,b_k,b_{k+1})$ to describe the function $f(x_1,\cdots,x_k,x_{k+1})$.
Since
\begin{equation}
Q_E(f)\leq \max\{Q_E(f_{x_1=0}),Q_E(f_{x_1=1})\}+1,
\end{equation}
we just need to consider the case that at least one of the functions $f_{x_1=0}$ and $f_{x_1=1}$ is isomorphic to $\mbox{AND}_k$. For other cases we have $Q_E(f)<k+1$.

\begin{table}[http]
\caption{Exact quantum query complexity for $(k+1)$-bit symmetric Boolean functions. }\label{T2}
\begin{center}
\begin{tabular}{|l|l|l|}
 \hline
  $b_0b_1\ldots,b_k,b_{k+1}$ & Type of function & Query complexity \\
      \hline
  $(0,|0,\ldots,0,1)$ & $\mbox{AND}_{k+1}$ & $k+1$ \\
    \hline
  $(0,|0,1,\ldots,1)$ & $\mbox{Th}_{k+1}^{2}$ & $k$ \\
    \hline
  $(0,|1,0,\ldots,0)$ & $\mbox{EXACT}_{k+1}^1$ & $k$ \\
    \hline
  $(0,|1,\ldots,1,0)$ & $\mbox{NAE}_{k+1}$ & $<k+1$\\
    \hline
      \hline
  $(1,|0,\ldots,0,1)$ &Isomorphic to $\mbox{NAE}_{k+1}$ & $<k+1$ \\
    \hline
  $(1,|0,1,\ldots,1)$ & Isomorphic to $\mbox{EXACT}_{k+1}^1$ & $k$ \\
    \hline
  $(1,|1,0,\ldots,0)$& Isomorphic to $\mbox{Th}_{k+1}^{2}$ &$k$\\
    \hline
  $(1,|1,\ldots,1,0)$ & Isomorphic to $\mbox{AND}_{k+1}$ & $k+1$ \\
    \hline
      \hline
  $(0,\ldots,0,1,|0) $& $\mbox{EXACT}_{k+1}^{k}$ & $k$ \\
    \hline
  $(0,1,\ldots,1,|0) $&   $\mbox{NAE}_{k+1}$& $<k+1$ \\
    \hline
  $(1,0,\ldots,0,|0)$ & Isomorphic to $\mbox{AND}_{k+1}$ & $k+1$ \\
    \hline
  $(1,\ldots,1,0,|0) $& Isomorphic to $\mbox{Th}_{k+1}^{k}$ & $k$ \\
    \hline
      \hline
  $(0,\ldots,0,1,|1)$ &  $\mbox{Th}_{k+1}^{k}$ & $k$ \\
    \hline
  $(0,1,\ldots,1,|1) $&  Isomorphic to $\mbox{AND}_{k+1}$ &$k+1$ \\
    \hline
  $(1,0,\ldots,0,|1) $& Isomorphic to $\mbox{NAE}_{k+1}$& $<k+1$ \\
    \hline
  $(1,\ldots,1,0,|1)$ &Isomorphic to $\mbox{EXACT}_{k+1}^{k}$ & $k$ \\
  \hline
\end{tabular}

\end{center}
\end{table}

There are three cases we have to consider according to the value of $b$.

{\bf Case 1}  $b=(0,\ldots,0,1)$. In this case $f=\mbox{AND}_{k+1}$.

{\bf Case 2}  $b=(1,0,\ldots,0)$. In this case  $f$ is isomorphic to  $\mbox{AND}_{k+1}$.

{\bf Case 3} Otherwise,   $f_{x_1=0}$  can be described by the vector $(b_0,b_1,\ldots,b_{k})$ and  $f_{x_1=1}$  can be described by the vector $(b_1,\ldots,b_{k},b_{k+1})$.
Thus we just need to consider Boolean functions that can be described by vector $b=(b_0,b_1,\ldots,b_k,b_{k+1})$ such that
one of the following vectors
\begin{equation}
(\overbrace{0,\ldots,0}^{k},1),\linebreak[0](0,\overbrace{1,\ldots,1}^{k}),\linebreak[0](1,\overbrace{0,\ldots,0}^k),(\overbrace{1,\ldots,1}^k,0)
\end{equation}
is its prefix or suffix\footnote{Let  $b=(b_0,b_1,\ldots, b_{k+1})$. We say that  $(b_0,\ldots, b_{k})$  is a prefix of $b$ and  $(b_1,\ldots, b_{k+1})$  a suffix of $b$.}. There are 16 such Boolean functions and their query complexity are listed in Table~\ref{T2}.

According to   Table~\ref{T2}, only  $(k+1)$-bit Boolean functions which are isomorphic to $\mbox{AND}_{k+1}$ require $k+1$ queries. Thus, the theorem has been proved.
\end{proof}

It is mentioned in \cite{MJM11,Aar03} that  all non-constant $n$-bit  symmetric Boolean functions have exact classical  complexity  $n$. We give now a rigorous proof of that.

\begin{theorem}
If $f:\{0,1\}^n\to \{0,1\}$ is a    non-constant symmetric function, then $D(f)=n$.
\end{theorem}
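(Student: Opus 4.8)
The plan is to use the adversary/sensitivity lower bound together with the degree bound from Lemma~\ref{D(f)-geq-Deg(f)}. First I would recall that for a symmetric Boolean function $f$ described by the vector $(b_0,b_1,\ldots,b_n)$, being non-constant means that there is at least one index $k\in\{0,1,\ldots,n-1\}$ with $b_k\neq b_{k+1}$; call such a $k$ a \emph{jump}. The key observation is that any input $x$ with $|x|=k$ is then \emph{fully sensitive}: flipping any $0$-coordinate of $x$ to $1$ changes the Hamming weight to $k+1$ and hence changes the value of $f$, while flipping any $1$-coordinate to $0$ changes the weight to $k-1$. So the block sensitivity at such an $x$ on the $n-k$ zero-coordinates is $n-k$. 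To get sensitivity on the remaining $k$ coordinates as well, I would pick the jump $k$ to be either the largest or the smallest jump, or alternatively argue directly as follows.

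The cleanest route: take the smallest jump $k_0$ (so $b_0=\cdots=b_{k_0}\neq b_{k_0+1}$) and the largest jump $k_1$ (so $b_{k_1}\neq b_{k_1+1}=\cdots=b_n$). If $k_0=k_1$, then $f$ depends only on whether $|x|\le k_0$ or $|x|>k_0$, i.e. $f$ is (up to output negation) the threshold function $\mathrm{Th}_n^{k_0+1}$, whose deterministic complexity is $n$ — this can be seen by a simple adversary argument, or by noting that its multilinear degree is $n$ (the top monomial $\prod_i x_i$ has nonzero coefficient when $k_0+1\le n$, which holds since the function is non-constant) and invoking Lemma~\ref{D(f)-geq-Deg(f)}. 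If $k_0<k_1$, consider an input $x$ with $|x|=k$ for some jump $k$ with $k_0\le k\le k_1$: I claim every coordinate of $x$ is sensitive. Flipping a $0$ to a $1$ reaches weight $k+1\le k_1+1$; flipping a $1$ to a $0$ reaches weight $k-1\ge k_0-1\ge 0$. We need the values to actually differ, so instead I would choose $x$ of weight exactly $k_0$ (respectively $k_1$) and combine: at weight $k_0$, all $n-k_0$ zero-coordinates are sensitive; at weight $k_1+1$...

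To avoid case juggling I prefer the degree argument as the spine of the proof. By Lemma~\ref{D(f)-geq-Deg(f)} it suffices to show $\deg(f)=n$, i.e. that the coefficient $a_{[n]}$ of the top monomial $\prod_{i\in[n]} x_i$ in the multilinear representation of $f$ is nonzero. For a symmetric function, Möbius inversion gives
\begin{equation}
a_{[n]}=\sum_{k=0}^{n}(-1)^{n-k}\binom{n}{k} b_k,
\end{equation}
since the coefficient of $\prod_{i\in[n]}x_i$ collects $\sum_{S\subseteq[n]}(-1)^{n-|S|}f(\mathbf{1}_S)$ and $f(\mathbf{1}_S)=b_{|S|}$. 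So I must show that this alternating sum is nonzero whenever $(b_0,\ldots,b_n)$ is non-constant. The main obstacle is precisely this combinatorial claim: a priori one could fear a non-constant $0/1$ vector whose binomial-weighted alternating sum vanishes. I would handle it by the finite-difference interpretation: $\sum_{k=0}^n (-1)^{n-k}\binom{n}{k} b_k = (\Delta^n b)(0)$, the $n$-th forward difference of the sequence $(b_k)$. If this equals $0$, then $b$, viewed as a function on $\{0,1,\ldots,n\}$, agrees with a polynomial of degree $<n$; but a $\{0,1\}$-valued function on $n+1$ points that is a polynomial of degree $\le n-1$ — hmm, that is not immediately a contradiction either.

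So the honest finish is a small direct argument rather than pure degree-counting. Here is the one I would write: among all non-constant symmetric $f$ on $n$ bits, induct on $n$ (base $n=1$ trivial). Query any variable, say $x_n$; then $f_{x_n=0}$ and $f_{x_n=1}$ are the symmetric $(n-1)$-bit functions with vectors $(b_0,\ldots,b_{n-1})$ and $(b_1,\ldots,b_n)$ respectively. Since $f$ is non-constant it has a jump $k$ with $b_k\neq b_{k+1}$; this jump lies inside $(b_0,\ldots,b_{n-1})$ if $k\le n-2$ and inside $(b_1,\ldots,b_n)$ if $k\ge 1$ — and at least one of these holds, so at least one restriction is non-constant, with $D$ equal to $n-1$ by induction. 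That only gives $D(f)\ge n$ if \emph{every} branch forces $n-1$ more queries; the adversary must be able to keep the answer undetermined. The correct statement to prove by induction is therefore: \emph{for any non-constant symmetric $f$ and any set of already-queried positions with fixed answers consistent with two inputs of different $f$-value, the remaining positions all still need to be queried}; equivalently, the decision tree must query all $n$ variables on some path. Concretely: an adversary answers queries so that, writing $z$ for the number of $0$-answers and $o$ for the number of $1$-answers so far (with $z+o<n$), it always keeps available both an extension to some weight at or below a jump and an extension to some weight above it. Since a jump $k$ exists, as long as fewer than $n$ variables are queried the set of achievable total weights for the remaining assignment is an interval $[o,\,o+(n-z-o)]=[o,n-z]$ of length $n-z-o\ge 1$ — and one checks this interval always straddles a point where $f$ changes, provided the adversary picks answers maintaining $o\le k\le n-z$ for the fixed jump $k$. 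This is possible because at each step $o$ can only increase and $n-z$ can only decrease, each by at most $1$, so the adversary steers $k$ to stay inside. When all $n$ variables are finally queried, the weight is determined, so the tree had depth $n$; combined with $D(f)\le n$ always, $D(f)=n$. The main obstacle, and the part I would write most carefully, is verifying that the adversary can always keep $o\le k\le n-z$: if a query would push $o$ past $k$ the adversary answers $0$ instead, and symmetrically, and one must check these choices never conflict — which they don't, because they only conflict when $o=k=n-z$, i.e. when all variables are already queried.
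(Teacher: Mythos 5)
Your final argument is, at its core, the same adversary argument the paper uses: fix a jump $b_k\neq b_{k+1}$ and answer queries so that completions of both weight $k$ and weight $k+1$ remain possible, forcing the last variable to be queried; the paper simply exhibits the resulting path explicitly (in its indexing $b_{k-1}\neq b_k$: answer $1$ to the first $k-1$ queries and $0$ to the next $n-k$). Your decision to abandon the degree route was also correct and worth noting: $\textrm{deg}(f)$ can be strictly less than $n$ for non-constant symmetric $f$ (e.g. $\mbox{NAE}_3$ has degree $2$), so Lemma~\ref{D(f)-geq-Deg(f)} alone cannot give $D(f)=n$.

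There is, however, an off-by-one slip in the invariant you actually maintain, and as written the final step does not go through. Keeping $o\le k\le n-z$ only guarantees that weight $k$ is achievable; it does not guarantee that the reachable interval $[o,\,n-z]$ straddles the jump, since when $n-z=k$ the reachable weights are $o,\dots,k$ and $f$ may be constant there. Concretely, for $f=\mbox{AND}_n$ (jump at $k=n-1$) your invariant permits the adversary to answer $0$ on the very first query (then $n-z=n-1=k$), after which $f$ is already determined, so an adversary merely respecting $o\le k\le n-z$ does not force depth $n$. The fix is exactly the condition you state in words just before: maintain $o\le k$ \emph{and} $k+1\le n-z$, i.e. keep both weight $k$ and weight $k+1$ achievable. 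This can be preserved as long as $o+z<n-1$ (answer $1$ if $o<k$, otherwise answer $0$, which is legal because then $z<n-k-1$), and it first becomes unpreservable when $o=k$ and $z=n-k-1$, i.e. after exactly $n-1$ queries --- at which point the two settings of the last unqueried variable give weights $k$ and $k+1$ with different $f$-values, so the $n$th query is forced and $D(f)=n$. Your conflict analysis (``only when $o=k=n-z$, all variables queried'') should be replaced by this $(n-1)$-query statement; with that correction your proof coincides with the paper's.
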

\begin{proof}
Suppose $f$ can be described by the vector $(b_0,b_1,\ldots,b_n)\in\{0,1\}^{n+1}$. Since $f$ is non-constant, there exists a $k\in[n]$ such that $b_{k-1}\neq b_{k}$. If the first $k-1$ queries return $x_i=1$ and the next $n-k$ queries return $x_i=0$, then we will need to query the last variable as well.
\end{proof}

\section{Monotone Boolean functions}

\begin{theorem}
Let $f:\{0,1\}^n\to \{0,1\}$ be a monotone Boolean  function. $Q_E(f)=n$ iff $f$ is isomorphic to $\mbox{AND}_n$.
\end{theorem}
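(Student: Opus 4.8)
The plan is to settle the ``if'' direction at once from \cite{BBC+98}, and to prove the ``only if'' direction by induction on $n$. The first thing I would record is that, among monotone functions, being isomorphic to $\mbox{AND}_n$ is the same as being equal to $\mbox{AND}_n$ or to $\mbox{OR}_n$: negating a proper nonempty set of input variables destroys monotonicity, so the only admissible input sign change is to negate all variables, which by itself yields an antitone function and hence must be accompanied by an output negation, and that operation sends $\mbox{AND}_n$ to $\mbox{OR}_n$. Consequently it is enough to prove that if $f$ is monotone, depends on all $n$ variables, and $f\notin\{\mbox{AND}_n,\mbox{OR}_n\}$, then $Q_E(f)\le n-1$ (if $f$ does not depend on all of its variables then $Q_E(f)\le n-1$ trivially and $f$ is not isomorphic to $\mbox{AND}_n$).

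For the base case $n=3$ I would invoke \cite{MJM11}: $\mbox{AND}_3$ is, up to isomorphism, the only $3$-bit Boolean function requiring $3$ queries, so a fortiori the same holds within the monotone functions. For the inductive step, assume the claim for $n-1$ (so $n\ge 4$) and let $f$ be monotone, depending on all $n$ variables, with $f\notin\{\mbox{AND}_n,\mbox{OR}_n\}$. For each $i$ the restrictions $f_{x_i=0},f_{x_i=1}$ are monotone on at most $n-1$ variables, so by the inductive hypothesis together with the fact that $Q_E$ never exceeds the number of relevant variables, $Q_E(f_{x_i=b})=n-1$ is possible only when $f_{x_i=b}$ equals, on its $n-1$ variables, $\mbox{AND}_{n-1}$ or $\mbox{OR}_{n-1}$. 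So if some $i$ has both restrictions outside $\{\mbox{AND}_{n-1},\mbox{OR}_{n-1}\}$, then Eq.~(\ref{Eq-n-1ton}) gives $Q_E(f)\le n-1$, and we are done.

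The remaining, ``rigid'', case is when for every $i$ at least one of $f_{x_i=0},f_{x_i=1}$ lies in $\{\mbox{AND}_{n-1},\mbox{OR}_{n-1}\}$; here I would use monotonicity. From $f_{x_i=0}\le f_{x_i=1}$ and the fact that $f$ depends on $x_i$: if $f_{x_i=1}=\mbox{AND}_{n-1}$ then $f_{x_i=0}$ is a monotone function strictly below $\mbox{AND}_{n-1}$, hence $\equiv 0$, forcing $f=\mbox{AND}_n$, which is excluded; symmetrically $f_{x_i=0}=\mbox{OR}_{n-1}$ forces $f=\mbox{OR}_n$, excluded. Hence for every $i$ one has $f_{x_i=0}=\mbox{AND}_{n-1}$ or $f_{x_i=1}=\mbox{OR}_{n-1}$; let $A$ and $B$ be the two corresponding sets of indices, so $A\cup B=[n]$. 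If there were $i\in A$ and $j\in B$ with $i\ne j$, then evaluating $f$ on the inputs with $x_i=0$, $x_j=1$ and the other $n-2$ bits free would force $\bigwedge_{k\notin\{i,j\}}x_k=\bigvee_{k\notin\{i,j\}}x_k$, which is impossible since $n-2\ge 2$. Since $A\cup B=[n]$ with $n\ge 4$, this forces $A=[n]$ or $B=[n]$; in the first case $f_{x_i=0}=\bigwedge_{k\ne i}x_k$ for all $i$ pins $f$ down to $\mbox{Th}_n^{n-1}$, and in the second to $\mbox{Th}_n^{2}$, and both satisfy $Q_E(f)=n-1$ by the threshold-function complexities recalled in Section~2. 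This completes the induction.

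The crux is the rigid case: the bare recursion $Q_E(f)\le\max\{Q_E(f_{x_i=0}),Q_E(f_{x_i=1})\}+1$ does not suffice there, because monotone functions all of whose one-variable restrictions are $\mbox{AND}_{n-1}$ or $\mbox{OR}_{n-1}$ really do exist --- $\mbox{Th}_3^2$ is one, which is precisely why $n=3$ must be isolated as the base case. The work is to show that for $n\ge 4$ this rigidity is restrictive enough to pin $f$ down to $\mbox{Th}_n^{n-1}$ or $\mbox{Th}_n^{2}$, whose exact quantum query complexity is already known to equal $n-1$.
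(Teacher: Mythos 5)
Your proof is correct, and while it shares the paper's overall skeleton --- induction on $n$, reduction to the case where some restriction $f_{x_i=b}$ has complexity $n-1$ (hence, being monotone, equals $\mbox{AND}_{n-1}$ or $\mbox{OR}_{n-1}$), and a final appeal to $Q_E(\mbox{Th}_n^{2})=Q_E(\mbox{Th}_n^{n-1})=n-1$ --- the execution of the key step is genuinely different. The paper works through the prime CNF/DNF of $f$, writing $f=(x_1\vee g(x_{-1}))\wedge \mbox{OR}_{n-1}(x_{-1})$ and its three analogues, and then argues coordinate by coordinate (via specific inputs such as $10\cdots0$, $110\cdots0$) that every $f_{x_i=1}$ must again be $\mbox{OR}_{n-1}$ (resp.\ every $f_{x_i=0}$ must be $\mbox{AND}_{n-1}$), concluding $f=\mbox{Th}_{n}^{2}$ or $\mbox{Th}_n^{n-1}$; its degenerate subcases ($g$ constant) yield $\mbox{AND}_n$, $\mbox{OR}_n$, or a function independent of $x_1$. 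You replace all of this with two clean order-theoretic exclusions (a monotone function strictly below $\mbox{AND}_{n-1}$ is $0$, strictly above $\mbox{OR}_{n-1}$ is $1$, so $f_{x_i=1}=\mbox{AND}_{n-1}$ forces $f=\mbox{AND}_n$ and $f_{x_i=0}=\mbox{OR}_{n-1}$ forces $f=\mbox{OR}_n$) plus a global pairwise-incompatibility argument: distinct indices $i\in A$, $j\in B$ would force $\bigwedge_{k\notin\{i,j\}}x_k=\bigvee_{k\notin\{i,j\}}x_k$, impossible for $n\ge4$, so all coordinates are of one type and $f$ is a threshold. This buys a shorter argument that avoids the CNF/DNF machinery and makes transparent why $n=3$ must be the base case ($\mbox{Th}_3^2$ is exactly the rigid exception), at the cost of starting from the $3$-bit classification of \cite{MJM11} rather than the paper's elementary $n=2$ base; the paper's route, in exchange, exhibits the structural formulas for $f$ explicitly. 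Two points you leave implicit but should state: in the case $A=[n]$ the value $f(1\cdots1)=1$ is fixed by monotonicity (the restrictions $f_{x_i=0}$ alone do not determine it), and the trivial cases $n=1,2$ of the statement are outside your induction and should be dispatched separately.
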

\begin{proof}
Obviously,  $\mbox{AND}_n(x)$ and $\mbox{OR}_n(x)$ are the only two $n$-bit monotone Boolean functions that are isomorphic to $\mbox{AND}_n(x)$.
If $f$ is isomorphic to $\mbox{AND}_n(x)$, then $Q_E(f)=n$  \cite{BBC+98}. We prove the other direction by an induction on $n$.

{\bf BASIS}: Case $n=2$,  $\mbox{AND}_2(x_1,x_2)$ is the only $2$-bit function, up to isomorphism, that  requires 2 queries. Therefore the result holds for $n=2$.

{\bf INDUCTION}: Suppose the result holds for all $n\leq k$, we  prove that the result holds also for $n=k+1$ in the following way.

For any $i\in[k+1]$, if   $Q_E(f_{x_i=0})<k$ and  $Q_E(f_{x_i=1})<k$, then $Q_E(f)\leq \max\{Q_E(f_{x_i=0}),\linebreak[0]Q_E(f_{x_i=1})\}+1<k+1$.
Therefore, we  need  to consider only the case that
at least one of functions  $f_{x_i=0}$ and  $f_{x_i=1}$ requires $k$ quires. There are two  such cases:

{\bf Case 1:}
$Q_E(f_{x_1=1})=k$. According to the assumption, $f_{x_1=1}$ is isomorphic to $\mbox{AND}_k$. There are now two subcases  to
    consider:

{\bf Case 1a:} $f_{x_1=1}(x)=\mbox{OR}_k(x_2,\cdots,x_{k+1})=\mbox{OR}_k(x_{-1})$ (For convenience, we write $x_{-i}=x_1,\ldots, x_{i-1},x_{i+1},\linebreak[0]\ldots x_{k+1}$). Let us consider the CNF of $f$:
\begin{equation}
f(x)=\bigwedge_{I\in C}\bigvee_{i\in I} x_i=\left(\bigwedge_{I\in C,1\in I}\bigvee_{i\in I} x_i\right)\wedge\left(\bigwedge_{I\in C,1\not\in I}\bigvee_{i\in I} x_i\right)
.
\end{equation}
Therefore,
\begin{equation}
f(x)=(x_1\vee g_1(x_{-1}))\wedge \mbox{OR}_k(x_{-1}),
\end{equation}
where $x_1\vee g_1(x_{-1})=\left(\bigwedge_{I\in C,1\in I}\bigvee_{i\in I} x_i\right)$ and $g_1$ is also a monotone  function.
So we have $f(x)=1$ for any $x$ such that $10\cdots 0\prec x$ and $f(x)=0$ for any $x$ such that $x\preceq 10\cdots 0 $.

Let us consider now two subcases. Namely $f_{x_2=1}$ and $f_{x_2=0}$.
Since $10\cdots0\preceq 10\cdots 0$, we have $f(10\cdots0)\linebreak[0]=0$ and $f_{x_2=0}(x)\neq \mbox{OR}_k(x_{-2})$.
Since $10\cdots 0\prec 1010\cdots 0$, we have $f(1010\cdots 0)\linebreak[0]=1$ and $f_{x_2=0}(x)\neq \mbox{AND}_k(x_{-2})$.
Now we have $Q_E(f_{x_2=0})<k$ and therefore $Q_E(f_{x_2=1})\linebreak[0]=k$. Since $10\cdots 0\prec 110\cdots 0$, we have $f(110\cdots 0)=1$ and $f_{x_2=1}(x)\neq \mbox{AND}_k(x_{-2})$.
Therefore, $f_{x_2=1}(x)=\mbox{OR}_k(x_{-2})$.
 Using a similar argument, we can prove that for any $i\geq 2$, $f_{x_i=1}(x)=OR_k(x_{-i})$.
 Hence, for any $i\in[k+1]$, we have
\begin{equation}
f(x)=(x_i\vee g_i(x_{-i}))\wedge \mbox{OR}_k(x_{-i}).
\end{equation}
So  $f(x)=1$ for any $x$ such that $y\prec x$
 and $f(x)=0$ for any $x$ such that $x\preceq y$, where $y_i=1$ and $y_j=0$ for any $j\neq i$. It is not hard to see that in this case $f(x)=\mbox{Th}_{k+1}^2(x)$ and therefore $Q_E(f)=k$.

{\bf Case 1b:}    $f_{x_1=1}(x)=\mbox{AND}_k(x_{-1})$.  Let us consider the CNF of $f$.
  We have,
\begin{equation}
f(x)=(x_1\vee g'(x_{-1}))\wedge \mbox{AND}_k(x_{-1}),
\end{equation}
where $g'(x_{-1})$ is also a monotone Boolean function.

If  $g'$ is a constant function and  $g'(x_{-1})=0$, we have
$f(x)=\mbox{AND}_{k+1}(x_1x_2,\cdots,\linebreak[0] x_{k+1})$ and $Q_E(f)=k+1$.
Otherwise,  $\mbox{AND}_k(x_{-1})\leq g'(x_{-1})$, then  $f(x)=\mbox{AND}_k(x_{-1})$ and therefore   $Q_E(f)=k$.

{\bf Case 2:}  $Q_E(f_{x_1=0})=k$. There are again two subcases:

{\bf Case 2a:}  $f_{x_1=0}(x)=\mbox{OR}_k(x_{-1})$. Let us consider the DNF of $f$:
  \begin{equation}
    f(x)=\bigvee_{I\in D}\bigwedge_{i\in I} x_i=\left(\bigvee_{I\in D, 1\in I} \bigwedge_{i\in I} x_i \right)\vee\left(\bigvee_{I\in D, 1\not\in I}\bigwedge_{i\in I}x_i\right).
\end{equation}
We have
  \begin{equation}
    f(x)=(x_1\wedge h'(x_{-1}))\vee \mbox{OR}_{n-1}(x_{-1}),
\end{equation}
where $h'$ is a monotone Boolean function.  If  $h'$ is a constant function and  $h'(x_{-1})=1$,  then
$f(x)=\mbox{OR}_{k+1}(x_1x_2,\cdots,x_{k+1})$ and $Q_E(f)=k+1$. Otherwise  $h'(x_{-1})\leq OR_k(x_{-1})$ and therefore $f(x)=\mbox{OR}_k(x_{-1})$ and    $Q_E(f)=k$.

{\bf Case 2b:}  $f_{x_1=0}(x)=\mbox{AND}_k(x_{-1})$.  Let us consider the DNF of $f$.
  It has the form
    \begin{equation}
    f(x)=(x_1\wedge h_1(x_{-1}))\vee \mbox{AND}_k(x_{-1}),
\end{equation}
  where $h_1(x_{-1})$ is also a monotone Boolean function.
Therefore $f(x)=1$ for any $x$ such that $01\cdots 1\preceq x$ and $f(x)=0$ for any $x$ such that $x\prec 01\cdots 1 $.

Let us consider now two subcases: $f_{x_2=1}$ and $f_{x_2=0}$.
Since $0110\cdots 0\prec 01\cdots 1$, we have $f(0110\linebreak[0]\cdots 0)\linebreak[0]=0$ and $f_{x_2=1}(x)\neq \mbox{OR}_k(x_{-2})$.
Since $01\cdots 1\preceq 01\cdots 1$, we have $f(01\cdots 1)\linebreak[0]=1$ and $f_{x_2=1}(x)\neq \mbox{AND}_k(x_{-2})$.
Therefore we have $Q_E(f_{x_2=1})<k$ and   $Q_E(f_{x_2=0})\linebreak[0]=k$.
Since $0010\cdots 0\prec 01\cdots 1$, we have $f(0010\cdots 0)\linebreak[0]=0$ and $f_{x_2=0}(x)\neq \mbox{OR}_k(x_{-2})$.
Therefore,
 $f_{x_2=0}(x)=\mbox{AND}_k(x_{-2})$.
 Using a similar argument, we can prove that for any $i\geq 2$, $f_{x_i=0}(x)=\mbox{AND}_k(x_{-i})$.
 Hence, for any $i\in[k+1]$, we have
 \begin{equation}
    f(x)=(x_i\wedge h_i(x_{-i}))\vee \mbox{AND}_k(x_{-i}).
\end{equation}

Therefore $f(x)=1$ for any $x$ such that $y\preceq x$
 and $f(x)=0$ for any $x$ such that $x\prec y$, where $y_i=0$ and $y_j=1$ for any $j\neq i$. It is now not hard to show  that $f(x)=\mbox{Th}_{k+1}^{k}$ and $Q_E(f)=k$.

Therefore, the theorem has been proved.
\end{proof}

\section{Read-once Boolean functions}

\begin{theorem}\label{fsize}
If $f:\{0,1\}^n\to \{0,1\}$  is a read-once Boolean function, then $Q_E(f)=n$ iff $f$ is isomorphic to $\mbox{AND}_n$.
\end{theorem}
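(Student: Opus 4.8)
The plan is to reduce this statement to the already-proved result for \emph{monotone} Boolean functions (the theorem of Section~4), exploiting the fact that query-isomorphism permits negating individual input variables and that a read-once formula uses each variable exactly once, so every negation occurring in the formula can be absorbed into the input.

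The one lemma I would establish is: \emph{every read-once Boolean function $f$ on $n$ variables is query-isomorphic to a monotone read-once Boolean function $g$ on $n$ variables.} Given a read-once formula for $f$ over $\{\wedge,\vee,\neg\}$ in which each variable labels exactly one leaf, let $S\subseteq[n]$ be the set of variables appearing negated, and let $g$ be the function computed by the formula obtained by replacing every literal $\neg x_i$ with $x_i$. Then $g$ is built from $\wedge$ and $\vee$ applied to positive literals only, hence monotone; it is still read-once and still depends on all $n$ variables (flipping a literal changes neither property); and substituting $\neg x_i$ for $x_i$ at the coordinates $i\in S$ turns the formula for $g$ back into the one for $f$. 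Thus $f$ is obtained from $g$ by negating the inputs indexed by $S$, so $f\cong g$.

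The theorem then follows immediately. If $f$ is isomorphic to $\mbox{AND}_n$ then $Q_E(f)=n$ by \cite{BBC+98}. Conversely, suppose $f$ is read-once with $Q_E(f)=n$. Pick the monotone read-once $g\cong f$ supplied by the lemma; since $Q_E$ is invariant under query-isomorphism, $Q_E(g)=Q_E(f)=n$, so the monotone case (Section~4) gives that $g$ is isomorphic to $\mbox{AND}_n$, and since query-isomorphism is an equivalence relation and $f\cong g$, the function $f$ is isomorphic to $\mbox{AND}_n$ as well.

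I do not expect a serious obstacle along this route; the work has been done in Section~4. For comparison, the natural \emph{direct} alternative — induction on $n$ using $Q_E(f)\le\max\{Q_E(f_{x_i=0}),Q_E(f_{x_i=1})\}+1$ together with the fact that restrictions of read-once functions are again read-once — does run into a genuine difficulty: for certain read-once functions, already $x_1\wedge(x_2\vee x_3)$, every single-variable restriction is either constant or isomorphic to $\mbox{AND}_{n-1}$, so the recursion yields only $Q_E(f)\le n$, and one would then have to construct explicit sub-$n$-query exact quantum algorithms (or prove a suitable composition lemma) for a base family of ``gadget'' read-once functions. Passing through the monotone theorem sidesteps that case analysis entirely.
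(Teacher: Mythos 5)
Your proof is correct, and it takes a genuinely different route from the paper's. Your key lemma is sound: in a read-once formula (leaves labelled by literals, each variable at exactly one leaf), replacing every $\neg x_i$ by $x_i$ yields a monotone read-once function $g$ with $f(x)=g(x\oplus\chi_S)$, so $f$ and $g$ are query-isomorphic; since $Q_E$ is isomorphism-invariant, $Q_E(g)=n$, the monotone theorem of Section~4 gives $g\cong\mbox{AND}_n$, and transitivity gives $f\cong\mbox{AND}_n$. There is no circularity, because the monotone theorem is proved independently and before this one, and the later uses of Theorem~\ref{fsize} in Section~6 only need the statement, not a particular proof. The paper instead argues directly by structural induction on the formula: it writes $f=g\wedge h$ (or $g\vee h$) at the root, invokes the inductive hypothesis to conclude that $g$ and $h$ are each isomorphic to AND/OR of their variables when both have full complexity, and then disposes of the mixed cases (OR$\wedge$OR, AND$\wedge$OR, etc.) by explicit strategies that query most variables classically and finish with the $2$-query exact algorithm for $x_1\wedge(x_k\vee x_{k+1})$. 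What the two approaches buy: yours is shorter, avoids the four-case analysis entirely, and reduces the read-once class to the monotone class in one stroke; the paper's is self-contained within the read-once class and exhibits a composition-style argument (which is part of the authors' stated aim of giving different techniques for each subclass). One small correction to your closing remark: the ``direct alternative'' the paper actually pursues is induction on the root decomposition of the formula, not on single-variable restrictions, which is precisely how it sidesteps the obstacle you describe with $x_1\wedge(x_2\vee x_3)$.
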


\begin{proof}
If $f$ is isomorphic to $\mbox{AND}_n$, then $Q_E(f)=n$ \cite{BBC+98}. We prove the other direction as follows.

Since $f$ is a read-once Boolean function,   $f$   depends on all $n$ variables and $L(f)=n$, i.e each $(\neg) x_i$  labels once and only once a leaf variable, where $(\neg)$ denotes a possible negation. We
prove the result by an induction.

{\bf BASIS}:  $\mbox{AND}_3(x_1,x_2,x_3)$ is the only $3$-bit Boolean function, up to isomorphism,  that requires 3 quantum queries \cite{MJM11}. Therefore the result holds for $n=3$.

{\bf INDUCTION}: We will suppose the result holds for all $n\leq k$ ($k\geq  3$) and we will prove that the result holds also for all $n\leq k+1$.

Suppose the root of a formula $F$ is labeled with $\wedge$. Without loss of generality, we assume that there exist Boolean functions $g:\{0,1\}^p\to \{0,1\}$ and $h:\{0,1\}^q\to \{0,1\}$ such that $f(x)=g(y)\wedge h(z)$ and $p+q=k+1$, where $x=yz$. Since $f$ depends on all $k+1$ variables and $L(f)=k+1$, we have $L(g)=p$ and $L(h)=q$, where $g$ depends on all $p$ variables and $h$ depends on all $q$ variables. If $Q_E(g)<p$ or $Q_E(h)<q$,  then $Q_E(f)\leq Q_E(g)+Q_E(h)<k+1$. Now suppose $Q_E(g)=p$ and  $Q_E(h)=q$. According to the assumption,
$g$ is isomorphic to  $\mbox{AND}_p$ and $h$ is isomorphic to $\mbox{AND}_q$. There are therefore the following four cases to consider.

{\bf Case 1:} $g(y)=\mbox{AND}_p\left((\neg)x_1,\ldots,(\neg)x_p\right)$ and $h(z)=\mbox{AND}_q\left((\neg)x_{p+1},\linebreak[0]\ldots,\linebreak[0](\neg)x_{k+1}\right)$.  Then $f$ is isomorphic to $\mbox{AND}_{k+1}$ and therefore $Q_E(f)=k+1$.

{\bf Case 2:} $g(y)=\mbox{OR}_p\left((\neg)x_1,\ldots,(\neg)x_p\right)$ and $h(z)=\mbox{OR}_q\left((\neg)x_{p+1},\ldots,\linebreak[0](\neg)x_{k+1}\right)$.  Therefore
\begin{equation}
  f(x)=\mbox{OR}_p\left((\neg)x_1,\ldots,(\neg)x_p\right)\wedge \mbox{OR}_q\left((\neg)x_{p+1},\ldots,(\neg)x_{k+1}\right).
\end{equation}
      Without loss of generality, we  suppose that $f(x)=\mbox{OR}_p\left(x_1,\ldots,x_p\right)\wedge \mbox{OR}_q\left(x_{p+1},\ldots,x_{k+1}\right)$. Since $p+k-p+1=k+1>3$, we have $p\geq 2$ or $k-p+1\geq 2$.    Without loss of generality, we assume that   $k-p+1\geq 2$. Let us query $x_2$ to $x_{k-1}$ first.
      \begin{enumerate}
        \item [1)] If  $x_i=1$ for some $2\leq i\leq p$ and $x_j=1$ for some $p+1\leq j\leq k-1$, then  $f_{x_2\cdots x_{k-1}}(x)=1$.
        \item [2)] If  $x_i=1$ for some $2\leq i\leq p$ and $x_{p+1}=\cdots=x_{k-1}=0$, then $f_{x_2\cdots x_{k-1}}(x)=\mbox{OR}_2\left(x_{k},x_{k+1}\right)$.
        \item [3)]  If $x_2=\cdots=x_p=0$ and $x_i=1$ for some $p+1\leq i\leq k-1$, then $f_{x_2\cdots x_{k-1}}(x)=x_1$.
        \item [4)] Otherwise, $x_2=\cdots=x_{k-1}=0$ and therefore $f_{x_2\cdots x_{k-1}}(x)=x_1\wedge(x_k\vee x_{k+1})$ and $Q_E(f_{x_2\cdots x_{k-1}})=2$.
      \end{enumerate}
            Therefore $Q_E(f)\leq k-2+2<k+1$.

{\bf Case 3:}   $g(y)=\mbox{AND}_p\left((\neg)x_1,\ldots,(\neg)x_p\right)$ and $h(z)=\mbox{OR}_q\left((\neg)x_{p+1},\ldots,(\neg)x_{k+1}\right)$. Therefore $f(x)=\mbox{AND}_p\left((\neg)x_1,\ldots,(\neg)x_p\right)\wedge \mbox{OR}_q\left((\neg)x_{p+1},\ldots,(\neg)x_{k+1}\right)$.  Without loss of generality, we can  now suppose that
\begin{equation}
 f(x)=\mbox{AND}_p\left(x_1,\ldots,x_p\right)\wedge \mbox{OR}_q\left(x_{p+1},\ldots,x_{k+1}\right).
  \end{equation}
  If $p=k$, then $f=\mbox{AND}_{k+1}$ and $Q_E(f)=k+1$. Now we  consider the case $p<k$. Let us query $x_2$ to $x_{k-1}$ first.
   \begin{enumerate}
        \item [1)] If $x_2\cdots x_{p}\neq 1\cdots1$, then $f(x)=0$.
        \item [2)]   If  $x_2\cdots x_{p}= 1\cdots1$ and $x_{p+1}\cdots x_{k-1}\neq 0\cdots0$, then $f_{x_2\cdots x_{k-1}}(x)=x_1$.
        \item [3)]   If $x_2\cdots x_{p}= 1\cdots 1$ and $x_{p+1}\cdots x_{k-1}= 0\cdots0$, then $f_{x_2\cdots x_{k-1}}(x)=x_1\wedge(x_k\vee x_{k+1})$ and $Q_E(f_{x_2\cdots x_{k-1}})=2$.
      \end{enumerate}
 Therefore  $Q_E(f)\leq k-2+2<k+1$.

{\bf Case 4:}  $g(y)=\mbox{OR}_p\left((\neg)x_1,\ldots,(\neg)x_p\right)$ and $h(z)=\mbox{AND}_q\left((\neg)x_{p+1},\ldots,(\neg)x_{k+1}\right)$. This case is analogous to the {\bf Case 3}.

Symmetrically,  we can  consider the case  that the root of the formula $F$ is labeled with $\vee$. In this case,  we will need to deal with  functions with the same structure of $f(x_1,x_2,x_3)=x_1\vee (x_2\wedge x_3)$, which is isomorphic to $x_1\wedge(x_2\vee x_3)$. We omit the details here.
\end{proof}

It is mentioned in \cite{San95} that  all $n$-bit read-once  Boolean functions have exact classical quantum complexity  $n$. We give now a rigorous proof of that:

\begin{theorem}\label{Th-readonce}
If $f:\{0,1\}^n\to \{0,1\}$ is a read-once Boolean function, then $D(f)=n$.
\end{theorem}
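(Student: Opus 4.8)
The plan is to reduce the statement to the degree lower bound of Lemma~\ref{D(f)-geq-Deg(f)}. Since a read-once function depends on all $n$ of its variables, a decision tree that queries $x_1,\dots,x_n$ in a fixed order computes $f$, so $D(f)\le n$; hence it suffices to prove $D(f)\ge n$, and by Lemma~\ref{D(f)-geq-Deg(f)} it is enough to show $\mathrm{deg}(f)=n$ for every read-once $f$ on $n$ variables.

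To prove $\mathrm{deg}(f)=n$ I would induct on the structure of a read-once formula $F$ for $f$. Since $f$ depends on all $n$ variables and $L(f)=n$, each of the $n$ variables labels exactly one leaf of $F$. In the base case $F$ is a single leaf $(\neg)x_1$ and $\mathrm{deg}(f)=1$. For the inductive step the root of $F$ is labeled $\wedge$ or $\vee$, so $f=g\circ h$ with $\circ\in\{\wedge,\vee\}$, where $g$ and $h$ are computed by the two subformulas of $F$; these act on disjoint variable sets of sizes $p$ and $q$ with $p+q=n$, each variable again occurring exactly once, so each subformula has as many leaves as variables occurring in it. One checks $g$ and $h$ are themselves read-once: if $g$ did not depend on one of its variables $x_j$, then $f=g\circ h$ would not depend on $x_j$ either, a contradiction, so $g$ depends on all $p$ of its variables and $L(g)=p$ (and likewise for $h$); thus the induction hypothesis applies. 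Negating input variables ($x_i\mapsto 1-x_i$) or the output ($f\mapsto 1-f$) changes neither $\mathrm{deg}$ nor read-onceness, so this case split is exhaustive.

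The computation at the heart of the step is that for $\{0,1\}$-valued $g$ and $h$ on disjoint variable sets, $g\wedge h=g\cdot h$ and $g\vee h=g+h-g\cdot h$ as multilinear polynomials, and $\mathrm{deg}(g\cdot h)=\mathrm{deg}(g)+\mathrm{deg}(h)$: in the product the coefficient of the monomial $\prod_{i\in S\cup T}x_i$, with $S$ among $g$'s variables and $T$ among $h$'s, is exactly $a_S b_T$, so the top monomials of $g$ and $h$ multiply to a monomial with nonzero coefficient and no cancellation can occur. Hence in the $\wedge$ case $\mathrm{deg}(f)=p+q=n$ at once, and in the $\vee$ case the lower-order terms $g,h$ have degrees $p,q<n$ and cannot cancel the degree-$n$ term of $-g\cdot h$, so again $\mathrm{deg}(f)=n$. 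This closes the induction, giving $\mathrm{deg}(f)=n$, hence $D(f)\ge n$, hence $D(f)=n$.

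The only delicate points are bookkeeping: confirming that the two subformulas are genuine read-once formulas for functions that depend on all of their variables, and checking the disjoint-support multiplicativity of degree — both routine. If one prefers to avoid polynomials, an alternative is a pure adversary argument: when the tree makes its first query $x_i$, show there is an answer $b\in\{0,1\}$ for which $f_{x_i=b}$ is again read-once on the remaining $n-1$ variables (a short structural induction on $F$ handling the case where the leaf $(\neg)x_i$ sits directly under an $\wedge$ or an $\vee$), and then invoke $D(f_{x_i=b})=n-1$; the main work there is proving such a formula-preserving answer always exists.
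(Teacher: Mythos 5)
Your proof is correct and follows essentially the same route as the paper: reduce to showing $\mathrm{deg}(f)=n$ via Lemma~\ref{D(f)-geq-Deg(f)}, then induct on the read-once formula using $g\wedge h=g\cdot h$ and $g\vee h=g+h-g\cdot h$ on disjoint variable sets. Your explicit no-cancellation argument via disjoint supports plays the same role as the paper's invariant that there is exactly one monomial of top degree, so the two arguments are essentially identical.
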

\begin{proof}
Let us consider the multilinear polynomial representation of $f$.
 It is easy to prove  by induction that $\textrm{deg}(f)=n$ and there is just one monomial of $f$ of the degree $n$.

 {\bf BASIS}:  If $n=1$, then $f(x)=(\neg) x_1$. Therefore, $\textrm{deg}(f)=1$.

{\bf INDUCTION}: Suppose the result holds for all $n\leq k$, we will prove the result holds for all $n\leq k+1$.

Without loss of generality, let us assume that three exists an $i\in [n]$ such that
\begin{equation}
f(x_1,\ldots,x_{k+1})=g(x_1,\ldots,x_i)\wedge h(x_{i+1},\ldots,x_{k+1})
\end{equation}
 or
\begin{equation}
 f(x_1,\ldots,x_{k+1})=g(x_1,\ldots,x_i)\vee h(x_{i+1},\ldots,x_{k+1}),
 \end{equation}
where $L(g)=i$, $L(h)=k+1-i$,  $g$ and $h$ depend on all their variables. According to assumption of the theorem, we have $\mbox{deg}(g)=i$  and $g(x_1,\ldots,x_i)=(\pm)\prod_{j=1}^i(\neg)x_j+p(x_1,\ldots,x_i)$ where $\mbox{deg}(p)<i$, and $\mbox{deg}(h)=k+1-i$  and $h(x_{i+1},\ldots,x_{k+1})=(\pm)\prod_{j={i+1}}^{k+1}(\neg)x_j+q(x_{i+1},\ldots,x_{k+1})$ where $\mbox{deg}(q)<k+1-i$.

Since
\begin{equation}
f(x_1,\ldots,x_{k+1})=g(x_1,\ldots,x_i)\wedge h(x_{i+1},\ldots,x_{k+1})=g\cdot h
 \end{equation}
and
\begin{equation}
f(x_1,\ldots,x_{k+1})=g(x_1,\ldots,x_i)\vee h(x_{i+1},\ldots,x_{k+1})=g+h-g\cdot h.
 \end{equation}
 Therefore $\textrm{deg}(f)=k+1$ and there is just one monomial of $f$ of the degree $k+1$.

  According to Lemma~\ref{D(f)-geq-Deg(f)}, $D(f)\geq \textrm{deg}(f)=n$. Thus, $D(f)=n$.
\end{proof}

\section{General $n$-bit Boolean functions}

In this section we prove  our main result.
Without explicitly pointed out, $n>3$ in this section.

If $f$ is an $n$-bit Boolean function that is isomorphic to $\mbox{AND}_{n}$, then there must exist $b=b_1\ldots b_n\in\{0,1\}^n$ such that  every $f_{x_i=b_i}$ is equivalent to $\mbox{AND}_{n-1}$ ($\mbox{OR}_{n-1}$) up to  some negations of variables. Moreover $b$ has to be unique.   For example, if $f(x)=\mbox{OR}_n(x_1,x_2,\ldots,x_n)$, then  we have $f_{x_i=0}(x)=\mbox{OR}_{n-1}(x_1,\ldots,x_{i-1},x_{i+1},\linebreak[0]\ldots,x_n)$  for $i\in[n]$ and   $b=0\ldots 0$.

For an $n$-bit Boolean function $f$ that has  exact quantum query complexity $n$, we prove the following lemma.

\vspace*{12pt}
\noindent
\begin{lemma}\label{Lm-c7-1}
Suppose that $\mbox{AND}_{n-1}$ is the only (n-1)-bit Boolean function,  up to isomorphism, has  exact quantum query complexity $n-1$.
Let $f:\{0,1\}^n\to \{0,1\}$ be an $n$-bit Boolean function  that has  exact quantum query complexity $n$. There exists one and only one $b=b_1\ldots b_n\in\{0,1\}^n$ for every $i\in[n]$ such that $f_{x_i=b_i}$ is   equivalent to  $\mbox{AND}_{n-1}$ ($\mbox{OR}_{n-1}$) up  to some negations of the variables.
\end{lemma}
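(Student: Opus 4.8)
The plan is to treat the two assertions of the lemma separately: the \emph{existence} of a string $b=b_1\cdots b_n$ of the stated form, which is essentially immediate, and its \emph{uniqueness}, which carries all the content. For existence, I would fix $i\in[n]$: both $f_{x_i=0}$ and $f_{x_i=1}$ are $(n-1)$-bit functions, hence have $Q_E\le n-1$, so since $Q_E(f)=n$ the inequality~(\ref{Eq-n-1ton}) forces $Q_E(f_{x_i=b_i})=n-1$ for at least one $b_i\in\{0,1\}$. By the hypothesis of the lemma, $f_{x_i=b_i}$ is then isomorphic to $\mbox{AND}_{n-1}$; as its variables are already fixed to be the set $x_{-i}$, the isomorphism can only negate inputs and possibly the output, and negating the output turns a conjunction of literals into a disjunction of literals, so $f_{x_i=b_i}$ equals $\mbox{AND}_{n-1}$ or $\mbox{OR}_{n-1}$ up to negations of variables. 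Doing this for every $i$ produces a valid $b$.

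For uniqueness it suffices to rule out, at each coordinate, that \emph{both} restrictions have this form. So I would assume $g:=f_{x_1=0}$ and $h:=f_{x_1=1}$ are each, up to negations of variables, equal to $\mbox{AND}_{n-1}$ or $\mbox{OR}_{n-1}$ (a general coordinate is handled identically) and deduce $Q_E(f)\le n-1$, contradicting $Q_E(f)=n$. Negating the output of $f$ if necessary — this negates both $g$ and $h$ and swaps ``conjunction of literals'' with ``disjunction of literals'' — I may assume $g$ is a conjunction of literals, and after negating some of $x_2,\dots,x_n$ (replacing $f$ by an isomorphic function) that $g=\bigwedge_{j=2}^{n}x_j$. \textbf{Case (A): $h$ is also a conjunction of literals.} Writing $P$, $N$ for the sets of $j\ge2$ for which $h$ uses $x_j$, $\neg x_j$ respectively, one checks from $f=(1-x_1)g+x_1h$ that $f=0$ as soon as some $x_j$ with $j\in P$ vanishes, and that otherwise $f=\big(\bigwedge_{j\in P}x_j\big)\wedge f'$, where $f'$ depends only on $x_1$ and $\{x_j:j\in N\}$ and equals $1$ exactly when $x_1=0$ and all of these equal $1$, or $x_1=1$ and all of them equal $0$. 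After negating the variables indexed by $N$, $f'$ is the ``all bits equal'' function on $1+|N|$ bits, i.e. the complement of $\mbox{NAE}_{1+|N|}$, so $Q_E(f')=Q_E(\mbox{NAE}_{1+|N|})\le|N|$; computing $\bigwedge_{j\in P}x_j$ first and, if it is $1$, then $f'$, gives $Q_E(f)\le|P|+|N|=n-1$.

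\textbf{Case (B): $h$ is a disjunction of literals.} If $h$ uses $x_j$ (rather than $\neg x_j$) for some $j\ge2$, I would query that $x_j$: one gets $f_{x_j=1}=x_1\vee\bigwedge_{2\le\ell\le n,\ \ell\ne j}x_\ell$ and $f_{x_j=0}=x_1\wedge(\text{a disjunction of literals in the other }n-2\text{ variables})$, both read-once functions on $n-1$ variables that are \emph{not} isomorphic to $\mbox{AND}_{n-1}$, so by Theorem~\ref{fsize} each has $Q_E\le n-2$ and hence $Q_E(f)\le n-1$. If instead $h$ uses $\neg x_j$ for every $j\ge2$, then $h=\neg g$ and $f=x_1\oplus\mbox{AND}_{n-1}(x_2,\dots,x_n)$; querying $x_n$ yields $x_1$ when $x_n=0$ and $x_1\oplus\mbox{AND}_{n-2}(x_2,\dots,x_{n-1})$ when $x_n=1$, so a short induction with base case $x_1\oplus x_2$ (which has $Q_E=1$) gives $Q_E(f)\le n-1$. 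In every case $Q_E(f)\le n-1$, the desired contradiction; hence at each coordinate $b_i$ is unique, and assembling them gives the unique $b$.

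The main obstacle is Case (A): querying $x_1$ directly only yields the useless bound $Q_E(f)\le n$, and the point is to recognize that after stripping the forced conjuncts indexed by $P$ the residual function is a disguised not-all-equal function, for which the saving $Q_E(\mbox{NAE}_m)\le m-1$ — one query below trivial — is exactly what rescues the count. Case (B) and the existence part are comparatively routine, resting on querying a well-chosen variable other than $x_1$ and on the already-established read-once theorem.
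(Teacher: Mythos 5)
The parenthetical ``$(\mbox{OR}_{n-1})$'' in the statement is not a per-coordinate ``or'': as the paper reads and later uses the lemma, it asserts that one and the same type occurs at \emph{all} coordinates --- either every $f_{x_i=b_i}$ is a conjunction of $n-1$ literals, or every one is a disjunction of $n-1$ literals. Your existence step only yields that each $f_{x_i=b_i}$ is a conjunction \emph{or} a disjunction, possibly of mixed type across different $i$, and nothing in your proposal rules the mixed situation out. This homogeneity is load-bearing: in Theorem~\ref{Th6} and the final theorem the paper normalizes to ``$b=0\cdots 0$ and every $f_{x_i=0}$ is $\mbox{OR}_{n-1}$ up to negations'' by negating variables and possibly the output, and output negation flips the type at every coordinate simultaneously, so that normalization (and hence the exhaustiveness of Cases 1--3 there) breaks down if the types are mixed. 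The paper closes exactly this point with Lemma~\ref{Lm-and-or}: if $f_{x_i=b_i}$ is a conjunction of literals, then for every $j\neq i$ and $c\in\{0,1\}$ the restriction $f_{x_j=c}$ has at least two zero inputs (fix $x_i=b_i$ and $x_j=c$ and falsify the conjunction in at least two ways, possible since $n\geq 4$), whereas a disjunction of $n-1$ literals has exactly one zero input; hence no $f_{x_j=b_j}$ can be of OR type. You need this short argument, or an equivalent one, to obtain the lemma as stated and as it is used downstream.

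Apart from that omission, your argument is correct and follows essentially the paper's route. Existence is the paper's Lemma~\ref{C6-l1} combined with the hypothesis, and your reduction of uniqueness to ``both restrictions at a single coordinate isomorphic to $\mbox{AND}_{n-1}$ forces $Q_E(f)\leq n-1$'' is precisely the paper's Lemma~\ref{Lm-both}; your proof of that core is a clean streamlining of the paper's case analysis. Your Case (A), after the output-negation reduction, merges the paper's conjunction/conjunction and disjunction/disjunction cases and reproduces its reduction to a not-all-equal function; your Case (B) is the paper's mixed case, and your explicit induction showing $Q_E\bigl(x_1\oplus\mbox{AND}_{n-1}\bigr)\leq n-1$ actually substantiates a bound the paper merely asserts in its Case 2a. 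One small caveat: in Case (B) the appeal to Theorem~\ref{fsize} to get $Q_E\leq n-2$ for $x_1\vee\mbox{AND}_{n-2}$ and for $x_1\wedge(\mbox{disjunction of } n-2 \mbox{ literals})$ requires $n\geq 4$ so that these are not isomorphic to $\mbox{AND}_{n-1}$; this is harmless because the section assumes $n>3$, but it should be said.
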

\vspace*{12pt}
\noindent
{\bf Proof:}
In order to prove this lemma, we study some properties of exact quantum query complexity of Boolean functions.
According to  Eq.~(\ref{Eq-n-1ton}),  we have the following lemma:
\vspace*{12pt}
\noindent
\begin{lemma}\label{C6-l1}
Let $f:\{0,1\}^n\to \{0,1\}$ be a Boolean function. If there exists an $i\in [n]$ such that both $Q_E(f_{x_i=0})<n-1$ and $Q_E(f_{x_i=1})<n-1$, then $Q_E(f)<n$.
\end{lemma}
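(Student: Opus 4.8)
The plan is to derive the lemma directly from the recursive bound in Eq.~(\ref{Eq-n-1ton}), which already holds for every choice of queried variable, so no fresh algorithmic construction is needed. First I would recall that Eq.~(\ref{Eq-n-1ton}) asserts, for any $i\in[n]$,
\begin{equation}
Q_E(f)\leq \max\{Q_E(f_{x_i=0}),Q_E(f_{x_i=1})\}+1.
\end{equation}
The underlying idea, already justified when that inequality was stated, is that an exact quantum algorithm for $f$ may first query $x_i$, learn its value $b$, and then run the optimal exact algorithm for the $(n-1)$-bit restriction $f_{x_i=b}$; this costs one query plus at most $\max\{Q_E(f_{x_i=0}),Q_E(f_{x_i=1})\}$ further queries. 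Since the bound is valid for each $i$, I am free to instantiate it at whatever index is most convenient.

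Next I would specialize the bound to the particular index $i$ guaranteed by the hypothesis. Because $Q_E$ is integer-valued, the strict inequalities $Q_E(f_{x_i=0})<n-1$ and $Q_E(f_{x_i=1})<n-1$ are equivalent to $Q_E(f_{x_i=0})\leq n-2$ and $Q_E(f_{x_i=1})\leq n-2$, whence $\max\{Q_E(f_{x_i=0}),Q_E(f_{x_i=1})\}\leq n-2$. Substituting this into the displayed inequality yields $Q_E(f)\leq (n-2)+1=n-1<n$, which is exactly the claim.

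Since the whole argument is a one-line specialization of an already established inequality, there is essentially no obstacle to overcome; the lemma is a bookkeeping consequence that isolates a case that will recur in the main proof. The only point that genuinely deserves explicit mention is the passage from the strict bound $<n-1$ to $\leq n-2$, which relies on the integrality of query complexity. I would therefore state that integrality step explicitly, so that the conclusion $Q_E(f)\leq n-1$, and hence $Q_E(f)<n$, is unambiguous rather than resting on an implicit rounding of real-valued quantities.
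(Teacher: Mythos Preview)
Your proposal is correct and matches the paper's approach exactly: the paper simply notes that the lemma follows from Eq.~(\ref{Eq-n-1ton}) and gives no further argument. Your only addition is making the integrality step $Q_E(\cdot)<n-1\Rightarrow Q_E(\cdot)\le n-2$ explicit, which is fine but not something the paper bothers to spell out.
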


We know from \cite{MJM11} that $\mbox{AND}_3$    is the only $3$-bit Boolean function, up to isomorphism, that has exact quantum query complexity 3. For any $4$-bit function $f$,  if there exists   $i\in[4]$ such that   neither $f_{x_i=0}$ nor $f_{x_i=1}$ is  isomorphic to $\mbox{AND}_{n-1}$, then $Q_E(f)<4$.

\vspace*{12pt}
\noindent
\begin{lemma}\label{Lm-both}
Let $f:\{0,1\}^n\to \{0,1\}$ be a Boolean function. If there exists an $i\in [n]$ such that both $f_{x_i=0}$ and $f_{x_i=1}$ are isomorphic to $\mbox{AND}_{n-1}$, then $Q_E(f)<n$.
\end{lemma}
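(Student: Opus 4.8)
Since $Q_E$ is invariant under isomorphism I would assume $i=1$. Because $f_{x_1=0}$ and $f_{x_1=1}$ are isomorphic to $\mbox{AND}_{n-1}$, each of them takes some value on exactly one input of $\{0,1\}^{n-1}$ and the opposite value on the remaining $2^{n-1}-1$ inputs; let $a^0,a^1$ denote these exceptional inputs and $v_0,v_1$ the exceptional values. Replacing $f$ by $\neg f$ (which does not change $Q_E$) I may assume $v_0=1$, and negating the variables $x_2,\ldots,x_n$ I may assume $a^0=0^{n-1}$. The plan is to split on whether $v_1=1$ or $v_1=0$ and, in each case, either write down an explicit $(n-1)$-query algorithm or exhibit a variable $x_j$ with $Q_E(f_{x_j=0})<n-1$ and $Q_E(f_{x_j=1})<n-1$, so that Lemma~\ref{C6-l1} gives $Q_E(f)<n$. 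I will repeatedly use the standing hypothesis of Lemma~\ref{Lm-c7-1}, namely that an $(n-1)$-bit function has exact quantum query complexity $n-1$ only when it is isomorphic to $\mbox{AND}_{n-1}$ — equivalently, since $n\ge 4$, only when it takes some value on exactly one, or on exactly $2^{n-1}-1$, of its inputs.

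\textbf{Case $v_1=1$.} Then $f$ equals $1$ on exactly the two inputs $(0,a^0)$ and $(1,a^1)$ (which differ in coordinate $1$) and equals $0$ elsewhere. If these two inputs are antipodal, $f$ is isomorphic to $\neg\mbox{NAE}_n$ and $Q_E(f)=Q_E(\mbox{NAE}_n)\le n-1$. Otherwise they agree in some coordinate $j\neq 1$; writing $b$ for the common value, the restriction $f_{x_j=1-b}$ is identically $0$, while $f_{x_j=b}$ is an $(n-1)$-bit function that is $1$ on exactly two inputs — for $n\ge 4$ this is not isomorphic to $\mbox{AND}_{n-1}$, hence $Q_E(f_{x_j=b})<n-1$, and Lemma~\ref{C6-l1} applies.

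\textbf{Case $v_1=0$.} Put $c=a^1$. If $c\neq 0^{n-1}$, I would pick a coordinate $j$ with $c_j=1$; a short computation from the definition shows
\begin{equation}
f_{x_j=0}=x_1\vee\mbox{AND}_{n-2}(\neg x_\ell:\ell\neq 1,j),\qquad
f_{x_j=1}=x_1\wedge\mbox{OR}_{n-2}\big((\neg)x_\ell:\ell\neq 1,j\big),
\end{equation}
which are $1$ on $2^{n-2}+1$ and on $2^{n-2}-1$ inputs respectively; neither count equals $1$ or $2^{n-1}-1$ when $n\ge 4$, so both restrictions have $Q_E<n-1$ and Lemma~\ref{C6-l1} finishes. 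If $c=0^{n-1}$, then $f$ is isomorphic to $x_1\oplus\mbox{AND}_{n-1}(x_2,\ldots,x_n)$; restricting one of the conjunction variables, one restriction is the single-variable function $x_1$ (complexity $1$) and the other is isomorphic to $x_1\oplus\mbox{AND}_{n-2}$, which is $1$ on exactly $2^{n-2}$ inputs and therefore, for $n\ge 4$, again not isomorphic to $\mbox{AND}_{n-1}$; so both restrictions have $Q_E<n-1$ and Lemma~\ref{C6-l1} applies once more.

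\textbf{Where the difficulty sits.} The delicate branch is $c=0^{n-1}$, i.e.\ $f\cong x_1\oplus\mbox{AND}_{n-1}$: such $f$ has degree $n$, hence (Lemma~\ref{D(f)-geq-Deg(f)}) $D(f)=n$, so the strict inequality $Q_E(f)<n$ cannot be extracted combinatorially and must rely on a genuinely quantum input — here the inductive hypothesis of Lemma~\ref{Lm-c7-1} (alternatively, one can splice in a self-contained induction showing $Q_E(x_1\oplus\mbox{AND}_m)\le m$ for all $m\ge 1$, with base cases $m=1$, $\mbox{PARITY}_2$, and $m=2$, the three-bit result of \cite{MJM11}, and inductive step by restricting an $\mbox{AND}$-variable). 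The second spot where Lemma~\ref{C6-l1} has nothing to offer is the antipodal subcase of Case $v_1=1$: there $f\cong\neg\mbox{NAE}_n$ and \emph{every} restriction of $f$ still has complexity $n-1$, so one is forced to invoke the external bound $Q_E(\mbox{NAE}_n)\le n-1$.
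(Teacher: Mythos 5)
Your proposal is correct in substance but takes a genuinely different route from the paper's, and it carries one caveat. The paper proves the lemma unconditionally by a case analysis on the literal forms of $f_{x_1=0}$ and $f_{x_1=1}$ (AND/AND, AND/OR, OR/OR): the mixed-negation AND/AND case is handled by the explicit decomposition $f=\neg\mbox{NAE}_k(\cdot)\wedge\mbox{AND}_{n-k}(\cdot)$, the AND/OR cases by querying one variable and applying the read-once Theorem~\ref{fsize} to the two restrictions, the case $f=\neg\left(x_1\oplus\mbox{AND}_{n-1}\right)$ by the XOR structure, and the all-negated case by $\neg\mbox{NAE}_n$. You instead normalize the truth table ($v_0=1$, $a^0=0^{n-1}$) and count ones of restrictions, feeding everything through Lemma~\ref{C6-l1} plus the external bound $Q_E(\mbox{NAE}_n)\le n-1$; this collapses the paper's four cases into two and is arguably cleaner, and your treatment of the branch $f\cong x_1\oplus\mbox{AND}_{n-1}$ (restrict an AND-variable, or the self-contained induction $Q_E(x_1\oplus\mbox{AND}_m)\le m$) supplies a justification that the paper's Case~2a leaves essentially unstated. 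The caveat: the lemma as stated, and as proved in the paper, is unconditional, whereas in several places you pass from ``not isomorphic to $\mbox{AND}_{n-1}$'' to $Q_E<n-1$ via the standing hypothesis of Lemma~\ref{Lm-c7-1}. Since Lemma~\ref{Lm-both} is invoked only inside the proof of Lemma~\ref{Lm-c7-1}, where that hypothesis is assumed, your conditional version does not break the paper's logic, but you should either add the hypothesis to the statement or make those steps unconditional, which is easy: a function with exactly two ones factors as an AND of the literals on which the two ones agree conjoined with a $\neg\mbox{NAE}$ on the coordinates where they differ (giving $Q_E\le n-2$ directly, or an AND of $n-2$ literals if they differ in a single coordinate); the restrictions $x_1\vee\mbox{AND}_{n-2}$ and $x_1\wedge\mbox{OR}_{n-2}$ in your Case $v_1=0$, $c\neq 0^{n-1}$ are read-once and Theorem~\ref{fsize} applies; and your own induction, with base cases $\mbox{PARITY}_2$ and the $3$-bit classification of \cite{MJM11}, handles $x_1\oplus\mbox{AND}_{n-2}$.
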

\vspace*{12pt}
\noindent
{\bf Proof:}
Without loss of generality, we can assume  that  $i=1$.    According to Eq.~(\ref{Eq-df(x)}), we have
 \begin{equation}
    f(x)=\left(\neg x_1\wedge f_{x_1=0}(x_2,\ldots,x_n)\right)\vee\left(x_1\wedge f_{x_1=1}(x_2,\ldots,x_n)\right).
\end{equation}
Suppose that at least one of the functions  $f_{x_1=0}$ and $f_{x_1=1}$ is equivalent to   $\mbox{AND}_{n-1}$ up to some negations of the variables. Without loss of generality, we will now assume that $f_{x_1=1}(x)=\mbox{AND}_{n-1}(x_2,\ldots,x_n)$.
To prove the theorem, we   consider two  cases.

  {\bf Case 1:} $f_{x_1=0}(x)=\mbox{AND}_{n-1}((\neg)x_2,\ldots,(\neg)x_n)$.
   In this case we have  two subcases.

 {\bf Case 1a:} $f_{x_1=0}(x)=\mbox{AND}_{n-1}(\neg x_2,\ldots,\neg x_n)$. We have $$f(x)=\mbox{AND}_{n}(\neg x_1,\neg x_2,\ldots,\neg x_n)\vee \mbox{AND}_{n}\linebreak[0](x_1,\linebreak[0]x_2,\ldots,x_n)=\neg \mbox{NAE}(\linebreak[0]x_1\linebreak[0]x_2,\linebreak[0]\ldots,x_n).$$ Therefore, $Q_E(f)<n$.

  {\bf Case 1b:} $f_{x_1=0}(x)\neq \mbox{AND}_{n-1}(\neg x_2,\ldots,\neg x_n)$. Without loss of generality, we can suppose that there exists a $k\in\{2,\ldots,n-1\}$ such that $f_{x_1=0}\linebreak[0](x)=\mbox{AND}_{n-1}(\neg x_2,\linebreak[0]\ldots,\neg x_k,x_{k+1},\ldots,x_n )$. Then  $$f(x)=\mbox{AND}_{n}(\neg x_1,\ldots,\neg x_k,x_{k+1}, \ldots,x_n)\vee \mbox{AND}_{n}\linebreak[0](x_1,\linebreak[0]x_2,\ldots,x_n)$$
       $$=\left(\mbox{AND}_{k}(\neg x_1,\ldots,\neg x_k)\vee \mbox{AND}_{k}\linebreak[0](x_1,\linebreak[0]\ldots,x_k)\right)\wedge \mbox{AND}_{n-k}(x_{k+1},\ldots,x_n)$$
       $$=\neg \mbox{NAE}_k(\neg x_1,\ldots,\neg x_k)\wedge \mbox{AND}_{n-k}(x_{k+1},\ldots,x_n).$$
       Therefore, $Q_E(f)< k+n-k=n$.

{\bf Case 2:} $f_{x_1=0}(x)=\mbox{OR}_{n-1}((\neg)x_2,\ldots,(\neg)x_n)$.
   This means that we have two subcases.

{\bf Case 2a:} $f_{x_1=0}(x)=\mbox{OR}_{n-1}(\neg x_2,\ldots,\neg x_n)$. If $g(y)= \mbox{AND}_{n-1}(x_{2},\ldots,x_n)$, then $$f(x)=\left(\neg x_1\wedge \neg g(y)\right)\vee \left( x_1\wedge g(y)\right)=x_1\oplus g(y).$$
      Therefore, $Q_E(f)<n$.

{\bf Case 2b:} $f_{x_1=0}(x)\neq \mbox{OR}_{n-1}(\neg x_2,\ldots,\neg x_n)$.  Without loss of generality, we can suppose that $f_{x_1=0}(x)=\mbox{OR}_{n-1}( x_2,(\neg)x_3 \linebreak[0]\ldots,(\neg) x_n )$, then let us query $x_2$ first. If $x_2=0$, then $f_{x_2=0}(x)=\neg x_1\wedge \mbox{OR}_{n-2}( (\neg)x_3 \linebreak[0]\ldots,(\neg) x_n )$. According to Theorem \ref{fsize}, $Q_E(f_{x_2=0})<n-1$. If $x_2=1$, then $f_{x_2=1}(x)=\neg x_1\vee \mbox{AND}_{n-1}\linebreak[0](x_1,\linebreak[0]x_3,\ldots,x_n)=\neg x_1\vee \mbox{AND}_{n-2}\linebreak[0](x_3,\ldots,x_n)$. According to Theorem \ref{fsize}, $Q_E(f_{x_2=1})<n-1$. According to Eq.~(\ref{Eq-n-1ton}), $Q_E(f)< n-1+1=n$.

Now we need to  consider the case  that both  $f_{x_1=0}$ and $f_{x_1=1}$ are  $\mbox{OR}_{n-1}$ functions. Without loss of generality, we assume that $f_{x_1=1}(x)=\mbox{OR}_{n-1}(x_2,\ldots,x_n)$.
This means that we have again two subcases.

{\bf Case 3a:} $f_{x_1=0}(x)=\mbox{OR}_{n-1}(x_2,\ldots,x_n)$. In this case, we have $f(x)=\mbox{OR}_{n-1}(x_2,\ldots,x_n)$ and $Q_E(f)=n-1<n.$

{\bf Case 3b:} $f_{x_1=0}(x)\neq \mbox{OR}_{n-1}(x_2,\ldots,x_n)$. Without loss of generality generality, let us suppose that there exists a $k\in\{2,\ldots,n\}$ such that $f_{x_1=0}(x)=\mbox{OR}_{n-1}(\neg x_2,\linebreak[0]\ldots,\neg x_k,\linebreak[0]x_{k+1},\ldots,x_n )$. In  such a case $$f(x)=\left(\neg x_1\wedge\mbox{OR}_{n-1}(\neg x_2,\ldots,\neg x_k,x_{k+1}, \ldots,x_n)\right)\vee \left(  x_1\wedge\mbox{OR}_{n-1}\linebreak[0](x_2,\ldots,x_n)\right)$$
    Let us query $x_{k+1}$ to $x_n$ first. If $x_{k+1}=\cdots=x_n=0$, let $g(y)=f(x_1,\ldots,x_k,0,\ldots,0)$, then $$g(y)=\left(\neg x_1\wedge\mbox{OR}_{n-1}(\neg x_2,\ldots,\neg x_k,)\right)\vee \left(  x_1\wedge\mbox{OR}_{n-1}\linebreak[0](x_2,\ldots,x_k)\right)$$
    $$= \mbox{NAE}_{n}(\neg x_1,x_2,\ldots,x_k).$$
    Therefore, $Q_E(g)<k.$
    Otherwise, there exists a $j\geq k+1$ such that $x_j=1$. It is now easy to show that $f(x)=\neg x_1\vee x_1=1$. Therefore, $Q_E(f)<n-k+k=n.$

\vspace*{12pt}
\noindent
\begin{lemma}\label{Lm-and-or}
Let $f:\{0,1\}^n\to \{0,1\}$ be a Boolean function.  If there exist an $i\in[n]$ such that $f_{x_i=b}$ is
 equivalent to $\mbox{AND}_{n-1}$ ($\mbox{OR}_{n-1}$)   up  to some negations of the variables, then   $f_{x_j=c}$ is not  equivalent to  $\mbox{OR}_{n-1}$ ($\mbox{AND}_{n-1}$)   up  to some negations of the variables for $j\neq i$,
where $b,c\in\{0,1\}$.
\end{lemma}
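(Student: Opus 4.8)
The plan is to argue by contradiction: suppose there exist indices $i\neq j$ with, say, $f_{x_i=b}$ equivalent to $\mbox{AND}_{n-1}$ (up to negations) and $f_{x_j=c}$ equivalent to $\mbox{OR}_{n-1}$ (up to negations). Since negating input variables and swapping $\mbox{AND}\leftrightarrow\mbox{OR}$ are both within the isomorphism class, I would first normalize: relabel so that $i=1$, $j=2$, and choose negations of $x_3,\ldots,x_n$ and the values $b,c$ so that the two restrictions take the cleanest possible form. The key point is that $f_{x_1=b}$ and $f_{x_2=c}$ are both restrictions of the same function $f$, so they must agree on the overlap $\{x : x_1=b,\ x_2=c\}$, which is an $(n-2)$-dimensional subcube. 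Restricting $\mbox{AND}_{n-1}$ (in the variables $x_2,\ldots,x_n$) to $x_2=c$ gives either $\mbox{AND}_{n-2}$ or the constant $0$; restricting $\mbox{OR}_{n-1}$ (in the variables $x_1,\ldots,\hat{x_j},\ldots$) to $x_1=b$ gives either $\mbox{OR}_{n-2}$ or the constant $1$. For $n-2\geq 2$ these functions disagree somewhere on the subcube unless we are in one of the degenerate constant cases, so the bulk of the argument is a short case analysis over which of $\{x_1,x_2\}$ appears negated in each restriction and whether $b,c$ are the "absorbing" or "neutral" values.

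Concretely, after normalization $f_{x_1=b}(x_2,\ldots,x_n)=\mbox{AND}_{n-1}((\neg)x_2,\ldots,(\neg)x_n)$ and $f_{x_2=c}(x_1,x_3,\ldots,x_n)=\mbox{OR}_{n-1}((\neg)x_1,(\neg)x_3,\ldots,(\neg)x_n)$. Setting the remaining variables $x_3,\ldots,x_n$ to the values that make $\mbox{AND}_{n-1}$ output $1$, the first expression forces $x_2$ to a definite value $\beta$; but then on the same assignment the second expression (with $x_1=b$ on the slice $x_2=c$, assuming $c=\beta$ — if $c\neq\beta$ there is no overlap constraint issue but then we pick a different witness) must also equal $1$, which it does regardless, so instead I pin down a contradiction by exhibiting two points in the overlap with different $f$-values. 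The cleanest witnesses: pick the overlap point $p$ where $x_3,\ldots,x_n$ are all set to make $\mbox{AND}_{n-1}=1$; then $f(p)=1$. Pick the overlap point $p'$ differing from $p$ in exactly one coordinate among $x_3,\ldots,x_n$; since $\mbox{AND}_{n-1}$ has Hamming sensitivity $1$ at its unique $1$-input on those coordinates, $f(p')=0$. But $\mbox{OR}_{n-1}$ restricted to the slice $x_2=c$ is sensitive in the "wrong" direction — flipping one of $x_3,\ldots,x_n$ at a point where the OR is already $1$ keeps it $1$ (unless that point was the unique $0$-input of the OR, which I avoid by choosing $p$ to have at least two ones among the OR-variables, possible since $n-2\geq 2$). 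This yields $f(p')=1\neq 0$, the contradiction. The remaining degenerate branches — where restricting $\mbox{AND}_{n-1}$ to $x_2=c$ gives the constant $0$, or restricting $\mbox{OR}_{n-1}$ to $x_1=b$ gives the constant $1$ — are handled separately and more easily: a constant restriction of an $\mbox{AND}_{n-1}$ means $x_2$ (suitably negated) is the "killing" literal, which is incompatible with $x_2$ also being part of an $\mbox{OR}_{n-1}$ pattern on the other restriction, again by evaluating on a shared input.

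I also need the symmetric statement with the roles of $\mbox{AND}$ and $\mbox{OR}$ swapped, but that follows immediately by applying the already-proved case to $\neg f$, since $\neg f$ is isomorphic to $f$ and negation swaps the two patterns. The case $n-2<2$, i.e. $n\leq 3$, is outside the scope (the section assumes $n>3$), and for $n=4$ the overlap subcube has dimension $2$ where the $\mbox{AND}_2$-vs-$\mbox{OR}_2$ disagreement is still visible, so the argument goes through.

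The main obstacle I anticipate is bookkeeping the negations cleanly: "equivalent up to negations of the variables" means each restriction is $\mbox{AND}_{n-1}$ or $\mbox{OR}_{n-1}$ in literals $(\neg)x_\ell$, and the literals on the shared variables $x_3,\ldots,x_n$ need not match between the two restrictions. The fix is to not try to align the two literal patterns globally but only to locate, within the $(n-2)$-cube of shared variables, one assignment forcing $\mbox{AND}$-output $1$ together with a Hamming-neighbor that forces $\mbox{AND}$-output $0$ while keeping $\mbox{OR}$-output $1$ — this only needs $n-2\geq 2$ and a careful but elementary choice, not a global normalization. I expect the whole proof to be about a page of case analysis.
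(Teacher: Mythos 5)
Your argument is correct in substance, but it is more elaborate than the paper's, which dispenses with your case analysis entirely by a pure counting observation. After the same normalization ($i=1$, $j=2$, $f_{x_1=b}(x)=\mbox{AND}_{n-1}(x_2,\ldots,x_n)$), the paper simply notes that $f(b\,c\,0\,0\,{*}\cdots{*})=f(b\,c\,0\,1\,{*}\cdots{*})=0$: because the AND-type restriction has a unique $1$-input, every point of the shared subcube $\{x_1=b,\ x_2=c\}$ that violates a literal on a shared variable is a zero of $f$, and for $n\geq 4$ there are at least two such points. Since a function equivalent to $\mbox{OR}_{n-1}$ up to negations has exactly one zero, $f_{x_2=c}$ cannot be of that form --- no distinction is needed between your ``$c$ satisfies the $x_2$-literal'' case and the ``degenerate'' case; the zero-counting idea you reserve for the degenerate branch already covers everything, and your sensitivity-based witness pair $p,p'$ is not needed. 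Two small slips in your write-up, both repairable: (i) in the main branch you cannot ``choose $p$ to have at least two ones among the OR-variables,'' since $p$ is forced (it is the unique $1$-input of the AND-type restriction on the overlap); the correct fix is to choose the \emph{flip coordinate}: among the $n-2\geq 2$ shared coordinates at most one flip can land on the OR's unique zero, so a good $p'$ exists; (ii) in the degenerate branch the phrase ``$x_2$ also being part of an $\mbox{OR}_{n-1}$ pattern on the other restriction'' is off, since $x_2$ is not a variable of $f_{x_2=c}$ --- what you mean (and what works) is that the AND-type restriction forces $f\equiv 0$ on the whole $2^{n-2}$-point overlap, while an OR-type $f_{x_2=c}$ has at most one zero there. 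Your reduction of the swapped AND/OR statement to the proved one via $\neg f$, and your remark that $n\leq 3$ is outside the section's scope (indeed the lemma fails at $n=3$), are both fine.
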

\vspace*{12pt}
\noindent
{\bf Proof:}
Without loss of generality, we assume that $i=1$, $j=2$ and  $f_{x_1=b}(x)=\mbox{AND}_{n-1}(x_2,\linebreak[0]\cdots,x_n)$. In such a case we have $f(bc00*\cdots*)=f(bc01*\cdots*)=0$\footnote{* will denote one bit that can be 0 or 1.}. If we fix $c$, then  there are more than one   inputs such that $f_{x_2=c}(x)=0$. Therefore, $f_{x_2=c}$ is  not  equivalent to  $\mbox{OR}_{n-1}$    up  to some negations of the variables.

{\bf Proof of Lemma \ref{Lm-c7-1}:} According to Lemma~\ref{C6-l1}, for every $i\in[n]$, there must exist a $b_i\in\{0,1\}$ such that $f_{x_i=b_i}$ is isomorphic to $\mbox{AND}_{n-1}$, otherwise $Q_E(f)<n$.  Without loss of generality, we assume that $f_{x_1=b_1}$ is equivalent to  $\mbox{AND}_{n-1}$ ($\mbox{OR}_{n-1}$)  up  to some negations of the variables. According to Lemma~\ref{Lm-and-or},  no $f_{x_i=b_i}$ is  equivalent to  $\mbox{OR}_{n-1}$ ($\mbox{AND}_{n-1}$) up  to some negations of the variables. Therefore, for every $i>1$,  $f_{x_i=b_i}$ is  equivalent to  $\mbox{AND}_{n-1}$ ($\mbox{OR}_{n-1}$)  up  to some negations of the variables.

Now, suppose there exists $c=c_1\ldots c_n\neq b$ for every $i\in[n]$ such that $f_{x_i=c_i}$ is equivalent to  $\mbox{AND}_{n-1}$ ($\mbox{OR}_{n-1}$) up  to some negations of the variables.
Since $c\neq b$, there exist $i\in[n]$ such that $b_i\neq c_i$. We have therefore that both  $f_{x_i=b_i}$ and $f_{x_i=c_i}$ are isomorphic to  $\mbox{AND}_{n-1}$.     According to Lemma~\ref{Lm-both}, we have $Q_E(f)<n$, which is a contradiction.

In order to make our main result easier to understand, we consider $4$-bit Boolean functions first.
\vspace*{12pt}
\noindent
 \begin{theorem}\label{Th6}
 If $f$ is a 4-bit Boolean function, then $Q_E(f)=4$ iff $f$ is isomorphic to $\mbox{AND}_4$.
\end{theorem}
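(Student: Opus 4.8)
The plan is to obtain Theorem~\ref{Th6} as the base case of the inductive scheme the paper is building, using the already‑established $n=3$ result of \cite{MJM11} together with Lemmas~\ref{Lm-c7-1} and~\ref{Lm-both}. The ``if'' direction is immediate from \cite{BBC+98}. For the ``only if'' direction I would start from $Q_E(f)=4$. Since $\mbox{AND}_3$ is, by \cite{MJM11}, the unique $3$-bit Boolean function up to isomorphism with $Q_E=3$, the hypothesis of Lemma~\ref{Lm-c7-1} holds with $n=4$, so there is a unique $b=b_1b_2b_3b_4$ for which every $f_{x_i=b_i}$ equals $\mbox{AND}_3$ (or every one equals $\mbox{OR}_3$) up to negations of the variables. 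Replacing $f$ by an isomorphic function---negating the output if the common type is $\mbox{OR}_3$, and negating $x_i$ for each $i$ with $b_i=1$---I may assume $b=0000$ and that each $f_{x_i=0}$ has exactly one satisfying assignment.

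Next I would read off the combinatorial structure of $W:=f^{-1}(1)$. From $|W\cap\{x_i=0\}|=1$ for every $i$ it follows that distinct elements of $W$ have disjoint zero‑coordinate sets and that each coordinate lies in exactly one such set; hence the zero‑sets of the elements of $W$ other than $1111$ form a partition of $[4]$, and $W$ contains at most one further element, namely $1111$. This leaves only a short explicit list of possibilities for $f$, up to isomorphism: $\neg\mbox{OR}_4\cong\mbox{AND}_4$ (when $|W|=1$); $\neg\mbox{NAE}_4$ (the partition is trivial with $1111\in W$, or $W$ is a complementary pair $\{v,\neg v\}$); $\mbox{EXACT}_4^{3}$ and $\mbox{Th}_4^{3}$ (the partition is into four singletons); and four remaining functions arising from a $(3,1)$-, $(2,2)$- or $(2,1,1)$-block partition together with the extra point $1111$ (the $(2,1,1)$-case also without it)---these last four being, notably, outside the symmetric, monotone and read‑once classes already treated.

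I would then eliminate every candidate except $\mbox{AND}_4$. For the $\neg\mbox{NAE}_4$ family, in each case \emph{both} $f_{x_i=0}$ and $f_{x_i=1}$ are isomorphic to $\mbox{AND}_3$, so Lemma~\ref{Lm-both} gives $Q_E(f)<4$ at once (alternatively $Q_E(\mbox{NAE}_4)\le 3$). For $\mbox{EXACT}_4^{3}$ and $\mbox{Th}_4^{3}$ one has $Q_E=3$ by the formulas of \cite{AISJ13}. Each of the four remaining functions can be written as $\big[x_i=x_j\big]\wedge g(x_i,z)$ with $i,j$ in a common block of size $\ge 2$ and $g$ a small threshold/exact‑count function of the reduced variables---for instance $(x_1\!=\!x_2)\wedge\mbox{Th}_3^{2}(x_1,x_3,x_4)$ and $(x_2\!=\!x_3\!=\!x_4)\wedge(x_1\vee x_2)$; for these I would exhibit a $3$‑query exact quantum algorithm built from the known exact algorithms for $\mbox{PARITY}$, $\mbox{NAE}_3$ and $x_1\wedge(x_2\vee x_3)$ (the latter cited with $Q_E=2$ in \cite{MJM11}). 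Once each candidate other than $\mbox{AND}_4$ is shown to have $Q_E\le 3$, the theorem follows.

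The main obstacle is exactly this last step. A $3$‑query algorithm for the four leftover functions \emph{cannot} begin with an ordinary (classical) query: because $f$ has been normalized so that every $f_{x_i=0}$ is an $\mbox{AND}$ of three literals (hence $Q_E(f_{x_i=0})=3$), reading any single variable first yields only the useless bound $Q_E(f)\le 4$, and the same happens if one first reads a whole block. So one is forced to design genuinely quantum procedures from the outset---this is the part that the semidefinite‑programming search of \cite{MJM11} only carried out numerically---and checking their exactness on all $16$ inputs is the delicate work of the proof.
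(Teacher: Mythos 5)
Your setup and enumeration are sound: invoking Lemma~\ref{Lm-c7-1} with the $n=3$ basis of \cite{MJM11}, normalizing to $b=0000$ with every $f_{x_i=0}$ of $\mbox{AND}_3$-type, and reading off that the zero-sets of the elements of $f^{-1}(1)\setminus\{1111\}$ partition $[4]$ is a correct (and somewhat different) route from the paper, which instead splits into cases according to the negation pattern of the restrictions and uses Lemmas~\ref{lm-7} and~\ref{lm-8}. Your candidate list and the disposal of the $\neg\mbox{NAE}_4$ family (Lemma~\ref{Lm-both}) and of $\mbox{EXACT}_4^{3}$, $\mbox{Th}_4^{3}$ (\cite{AISJ13}) are fine. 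But the proof is not finished: for the four leftover functions of the form $[x_i=x_j]\wedge g$ you only state that you ``would exhibit'' a $3$-query exact algorithm, and then your closing paragraph concedes that designing and verifying such algorithms on all $16$ inputs is the unresolved ``delicate work.'' Since these four functions are exactly the ones not covered by the symmetric/monotone/read-once theorems or by the earlier lemmas, the heart of the statement $Q_E(f)\le 3$ for them is left unproved, so the argument has a genuine gap.

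Moreover, the obstacle you describe is not really there, and the way around it is the paper's key move. You correctly note that a first \emph{classical} query to a single variable (or to a whole block, bit by bit) only gives $Q_E(f)\le 4$; but a single \emph{quantum} query suffices to evaluate the parity $x_i\oplus x_j$ of two variables in a common block (the standard Deutsch/\mbox{PARITY} trick you yourself list as a building block). If $x_i\oplus x_j=1$ the answer is $0$ (in your normalization); if $x_i\oplus x_j=0$, the function collapses to a $3$-bit function $g$ on the identified variables, and in each of your four cases $g$ has at least two ones and two zeros (e.g.\ $\mbox{Th}_3^2$, $\mbox{EXACT}_3^2$, or a function with three satisfying points), so by Lemma~\ref{lm-8} it is not isomorphic to $\mbox{AND}_3$ and the $n=3$ classification of \cite{MJM11} gives $Q_E(g)\le 2$. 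Composing exact subroutines is automatically exact, so $Q_E(f)\le 1+2=3$ with no case-by-case verification over the $16$ inputs and no bespoke SDP-style constructions. Adding this one observation would close your gap and would in fact reproduce, in your combinatorial language, exactly the algorithmic step the paper uses in its Case~3.
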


\vspace*{12pt}
\noindent
{\bf Proof:}
If $f$ is isomorphic to $\mbox{AND}_4$, then $Q_E(f)=4$ \cite{BBC+98}.

Assume that a $4$-bit Boolean function $f$ such that $Q_E(f)=4$, we prove that $f$ is isomorphic to $\mbox{AND}_4$ as follows.
According to  Lemma \ref{Lm-c7-1}, there exists  one and only one $b=b_1b_2b_3b_4$ for every $i\in[4]$ such that $f_{x_i=b_i}$ is   equivalent to  $\mbox{AND}_{3}$ ($\mbox{OR}_{3}$) up  to some negations of the variables.
Since for any $4$-bit function  $f$ with $b=b_1b_2b_3b_4$, there exists a function $f'$ with $b'=0000$ isomorphic to $f$.  We can get $f'$  by some negations of the variables $x_i$ whenever $b_i=1$.
Therefore, without loss of generality, we assume that $b=0000$ and for every $i\in[4]$ such that $f_{x_i=0}$ is equivalent to  $\mbox{OR}_{3}$ up  to some negations of the variables.

There are three cases that we need now to consider:

\begin{table}
\caption{Values of  $4$-bit Boolean functions. }\label{T3}
\begin{center}
\begin{tabular}{|l|l|l|l|l|l|l|}
   \hline
  $x_1$ &$x_2$& $x_3$ &$x_4$ & $f(x)$: Case 1 & Case 2 & Case 3 \\
    \hline
  0& 0& 0& 0    & 0     & 1     & 1\\
    \hline
  0& 0& 0& 1    & 1     & 1     & 1\\
    \hline
  0& 0& 1& 0    & 1     & 1     & *\\
    \hline
  0& 0& 1& 1    & 1     & 1     &*\\
    \hline
  0& 1& 0& 0    &1      & 1     &1\\
    \hline
  0& 1& 0& 1    & 1     & 1     &1\\
    \hline
  0& 1& 1& 0    & 1     &1      &1\\
    \hline
  0& 1& 1& 1    & 1     & 0     &1\\
    \hline
      \hline
  1& 0& 0& 0    & 1     & 1     &1\\
    \hline
  1& 0& 0& 1    &  1    & 1     &1\\
    \hline
  1& 0& 1& 0    & 1     & 1     &1\\
    \hline
  1& 0& 1& 1    & 1     & 0     &1\\
    \hline
      \hline
  1& 1& 0& 0    &  1    & 1     &*\\
    \hline
  1& 1& 0& 1    &  1    &0      &* \\
    \hline
    \hline
  1& 1& 1& 0    & 1     &0      &*\\
    \hline
  1& 1& 1& 1    &*      &*       &*\\
  \hline
\end{tabular}

\end{center}
\end{table}

  {\bf Case 1:} For every $i\in[4]$, there is no negation variable occurrence in   $f_{x_i=0}$, that is $f_{x_1=0}(x)=\mbox{OR}(x_2,x_3,x_4)$, $f_{x_2=0}(x)=\mbox{OR}(x_1,x_3,x_4)$, $f_{x_3=0}(x)=\mbox{OR}(x_1,x_2,x_4)$ and $f_{x_4=0}(x)=\mbox{OR}(x_1,x_2,x_3)$. See Case 1 in Table \ref{T3} for values of $f(x)$. We still do not the value of $f(1111)$. If $f(1111)=1$, then $f(x)=\mbox{OR}(x_1,x_2,x_3,x_4)$,  which is isomorphic to  $\mbox{AND}_4$. If $f(1111)=0$, then $f(x)=\mbox{NAE}(x_1,x_2,x_3,x_4)$ and $Q_E(f)<4$.

  {\bf Case 2:} There are negations
    of all variables  in every  $f_{x_i=0}$, that is $f_{x_1=0}(x)=\mbox{OR}(\neg x_2, \neg x_3,\linebreak[0] \neg x_4)$, $f_{x_2=0}(x)=\mbox{OR}(\neg x_1, \neg x_3, \neg x_4)$, $f_{x_3=0}(x)=\mbox{OR}(\neg x_1, \neg x_2,\neg x_4)$ and $f_{x_4=0}(x)=\mbox{OR}(\neg x_1, \linebreak[0]\neg x_2, \neg x_3)$. See Case 2 in Table \ref{T3} for values of $f(x)$. If $f(1111)=1$, then $f(x)= \neg \mbox{Th}_4^3$ and $Q_E(f)=3<4$.  If $f(1111)=0$, then $f(x)= \neg \mbox{EXACT}_4^3$ and $Q_E(f)=3<4$.

    {\bf Case 3:} There is an $i\in[4]$ such that there is at least one negation variable occurrence and  one  no negation variable occurrence  in  $f_{x_i=0}$.
     Without loss of generality, we can now assume that $f_{x_1=0}(x)=\mbox{OR}( x_2, \neg x_3, (\neg) x_4)$. In order to analyse this case, we prove the following two lemmas first.
\vspace*{12pt}
\noindent
    \begin{lemma}\label{lm-7}
Let $f$ be an $n$-bit Boolean function and $f_{x_i=0}$ be    equivalent to  $\mbox{OR}_{n-1}$ up  to some negations of the variables for every $i\in[n]$. If $f_{x_1=0}(x)=\mbox{OR}_{n-1}(x_2,\neg x_3, (\neg)x_4,\ldots)$, then $f_{x_2=0}(x)=\mbox{OR}_{n-1}(x_1,\neg x_3, (\neg)x_4,\ldots)$ and $f_{x_3=0}(x)=\mbox{OR}_{n-1}(\neg x_1,\neg x_2, (\neg)x_4,\ldots)$.
\end{lemma}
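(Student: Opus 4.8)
\noindent
The plan is to track, for each $i\in[n]$, the unique zero of the restriction $f_{x_i=0}$. Since $f_{x_i=0}$ is $\mbox{OR}_{n-1}$ with some of its input variables negated, it has exactly one zero; extending this zero by the value $0$ in the $i$-th coordinate gives a point $w^{(i)}\in\{0,1\}^n$ that is a zero of $f$ with $w^{(i)}_i=0$. The negation pattern of $f_{x_i=0}$ is read off from $w^{(i)}$: for $k\neq i$, the variable $x_k$ occurs positively in $f_{x_i=0}$ exactly when $w^{(i)}_k=0$, and negatively exactly when $w^{(i)}_k=1$. First I would record the uniqueness fact that $w^{(i)}$ is the \emph{only} zero of $f$ whose $i$-th coordinate is $0$, since any such zero restricts to a zero of $f_{x_i=0}$ and that zero is unique.

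From this I would extract the key dichotomy for distinct indices $i,j$: either $w^{(i)}=w^{(j)}$ or $w^{(i)}\neq w^{(j)}$. In the first case the common point $w:=w^{(i)}=w^{(j)}$ satisfies $w_i=w_j=0$, so $x_j$ occurs positively in $f_{x_i=0}$ and $x_i$ occurs positively in $f_{x_j=0}$; moreover, comparing $w^{(i)}_k=w_k=w^{(j)}_k$ for every $k\notin\{i,j\}$ shows that $f_{x_i=0}$ and $f_{x_j=0}$ carry the very same literal on each such $x_k$. In the second case $w^{(i)}_j$ cannot be $0$, for otherwise $w^{(i)}$ would be a zero of $f$ with $j$-th coordinate $0$ and uniqueness would force $w^{(i)}=w^{(j)}$; hence $w^{(i)}_j=1$ and, symmetrically, $w^{(j)}_i=1$, so $x_j$ occurs negatively in $f_{x_i=0}$ and $x_i$ occurs negatively in $f_{x_j=0}$. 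In particular the sign of $x_j$ in $f_{x_i=0}$ always equals the sign of $x_i$ in $f_{x_j=0}$, and whenever that common sign is positive the two restrictions also agree on all their remaining literals.

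I would then apply this to the hypothesis $f_{x_1=0}(x)=\mbox{OR}_{n-1}(x_2,\neg x_3,(\neg)x_4,\ldots)$. Because $x_2$ occurs positively in $f_{x_1=0}$, the dichotomy gives $w^{(1)}=w^{(2)}$; because $x_3$ occurs negatively, it gives $w^{(1)}\neq w^{(3)}$. From $w^{(1)}=w^{(2)}$ we get that $f_{x_2=0}$ has the same literals as $f_{x_1=0}$ on $x_3,x_4,\ldots,x_n$ (thus $\neg x_3$ and the same signs $(\neg)x_4,\ldots$) and that $x_1$ occurs positively in $f_{x_2=0}$; that is precisely $f_{x_2=0}(x)=\mbox{OR}_{n-1}(x_1,\neg x_3,(\neg)x_4,\ldots)$. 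From $w^{(1)}\neq w^{(3)}$ we get that $x_1$ occurs negatively in $f_{x_3=0}$. Finally, since $x_3$ occurs negatively in $f_{x_2=0}$ (just shown), a second use of the dichotomy, now for the pair $\{2,3\}$, gives $w^{(2)}\neq w^{(3)}$ and hence that $x_2$ occurs negatively in $f_{x_3=0}$; therefore $f_{x_3=0}(x)=\mbox{OR}_{n-1}(\neg x_1,\neg x_2,(\neg)x_4,\ldots)$, which completes the argument.

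I do not expect a real obstacle here: once the restrictions are encoded by their unique zeros inside $\{0,1\}^n$, everything reduces to the remark that any two of these zeros either coincide or differ in both of the coordinates being compared. The only steps deserving a moment of care are the uniqueness claim (which relies on $f_{x_i=0}$ being literally an OR of literals, hence having a single zero) and the fact that, in the coinciding case, matching the zero coordinate by coordinate forces the literals themselves to match; both follow at once from the positive/negative reading of the zero's bits. As an alternative to the zero-tracking argument, one can compute the function obtained from $f$ by setting $x_i=x_j=0$ in two ways, once via $f_{x_i=0}$ and once via $f_{x_j=0}$: whichever of the two relevant literals is negative makes this restriction the constant $1$, and if both are positive the restriction is the OR of the surviving literals, whose uniqueness pins them down — this reproduces the same dichotomy.
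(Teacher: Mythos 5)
Your proof is correct and follows essentially the same route as the paper's: both arguments track the unique zero of each restriction $f_{x_i=0}$ (an $\mbox{OR}_{n-1}$ of literals, hence with a single zero, extended by $x_i=0$ to a point of $\{0,1\}^n$) and force the signs of the literals in the other restrictions by evaluating them at these points, your coincide-or-differ dichotomy being exactly the paper's contradiction step at the inputs $001y$, $0c0z$ and $c00z$. Your packaging even yields the slightly stronger (unneeded) fact that $f_{x_1=0}$ and $f_{x_2=0}$ carry identical literals on $x_4,\ldots,x_n$.
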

\vspace*{12pt}
\noindent
{\bf Proof:}
Since $f_{x_1=0}(x)=\mbox{OR}_{n-1}(x_2,\neg x_3, (\neg)x_4,\ldots)$, there exists a $y\in\{0,1\}^{n-3}$ such that $f(001y)=0$.
Suppose that  $f_{x_2=0}(x)=\mbox{OR}_{n-1}(\neg x_1,(\neg) x_3, (\neg)x_4,\ldots)$ or $f_{x_2=0}(x)=\mbox{OR}_{n-1}((\neg) x_1, x_3, (\neg)x_4,\ldots)$. We have   $f(001y)=1$, which is a contradiction.
Therefore, $f_{x_2=0}=\mbox{OR}_{n-1}(x_1,\neg x_3, (\neg)x_4,\ldots)$.

Now suppose that $f_{x_3=0}(x)=\mbox{OR}_{n-1}(x_1, (\neg) x_2, (\neg)x_4,\ldots)$. There have to  exist $c\in\{0,1\}$ and $z\in\{0,1\}^{n-3}$ such that $f(0c0z)=0$.  Since $f_{x_1=0}(x)=\mbox{OR}_{n-1}(x_2,\neg x_3, (\neg)x_4,\ldots)$, we have $f(0c0z)=1$, which is a contradiction.
Suppose that $f_{x_3=0}(x)=\mbox{OR}_{n-1}((\neg) x_1, x_2, (\neg)x_4,\linebreak[0]\ldots)$. There exist $c\in\{0,1\}$ and $z\in\{0,1\}^{n-3}$ such that $f(c00z)=0$. Since $f_{x_2=0}(x)=\mbox{OR}_{n-1}(x_1,\neg x_3, \linebreak[0](\neg)x_4,\ldots)$, we have $f(c00z)=1$, which is a contradiction.
Therefore, $f_{x_3=0}(x)\linebreak[0]=\linebreak[0]\mbox{OR}_{n-1}(\neg x_1,\neg x_2,\linebreak[0] (\neg)x_4,\ldots)$.

\vspace*{12pt}
\noindent
\begin{lemma}\label{lm-8}
Let $f$ be an $n$-bit Boolean function. If there exist 4 distinct inputs $x,y,u,v\in\{0,1\}^n$ such that $f(x)=f(y)=1$ and $f(u)=f(v)=0$, then $f$ is not isomorphic to $\mbox{AND}_n$.
\end{lemma}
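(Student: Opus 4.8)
The plan is to exploit the extremely rigid structure of $\mbox{AND}_n$ under the query-isomorphism relation. Recall that $\mbox{AND}_n$ takes the value $1$ on exactly one input, namely $1\cdots 1$, and the value $0$ on the remaining $2^n-1$ inputs. Writing out what ``$f$ is isomorphic to $\mbox{AND}_n$'' means, there is a permutation $\sigma$ of $[n]$ and a choice of negations such that $f(x)=(\neg)\,\mbox{AND}_n\left((\neg)x_{\sigma(1)},\ldots,(\neg)x_{\sigma(n)}\right)$ for all $x$. The first step is to observe that permuting the input variables and negating some of them each induce a bijection of the domain $\{0,1\}^n$ onto itself (a coordinate permutation, respectively a product of coordinatewise involutions), and hence preserve the cardinalities $|f^{-1}(0)|$ and $|f^{-1}(1)|$; the only other allowed operation, negation of the output, simply interchanges the two sets $f^{-1}(0)$ and $f^{-1}(1)$.

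Combining these observations, if $f$ is isomorphic to $\mbox{AND}_n$ then one of the two sets $f^{-1}(0)$, $f^{-1}(1)$ must be a singleton, since it is the image of the singleton $\mbox{AND}_n^{-1}(1)$ under a composition of a domain bijection with, possibly, the swap coming from output negation. I would then close the argument by contradiction: the hypothesis supplies distinct $x,y$ with $f(x)=f(y)=1$, so $|f^{-1}(1)|\ge 2$, and distinct $u,v$ with $f(u)=f(v)=0$, so $|f^{-1}(0)|\ge 2$. Thus neither set is a singleton, and $f$ cannot be isomorphic to $\mbox{AND}_n$.

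There is essentially no obstacle in this proof; the only points worth stating carefully are the unwinding of the definition of query-isomorphism and the remark that input permutations and negations act as bijections of $\{0,1\}^n$ (hence count-preserving), after which the counting argument finishes the proof in a single line. This lemma will then be used to rule out Case~3 of Theorem~\ref{Th6} by exhibiting, in the entries marked ``$*$'' of Table~\ref{T3}, two $1$-inputs and two $0$-inputs for any completion of $f$.
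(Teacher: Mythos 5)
Your proof is correct and matches the paper's argument in essence: the paper likewise notes that any function isomorphic to $\mbox{AND}_n$ is, up to negations of variables, either an $\mbox{AND}_n$-type function (with a unique $1$-input) or an $\mbox{OR}_n$-type function (with a unique $0$-input), which contradicts having two $1$-inputs and two $0$-inputs. Your phrasing via cardinality-preserving bijections of $\{0,1\}^n$ is just a slightly more formal way of stating the same one-line counting argument.
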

\vspace*{12pt}
\noindent
{\bf Proof:}
If $f$ is equivalent to $\mbox{AND}_n$ up to some negations of the variables, then there exists just one $x \in\{0,1\}^n$ such that $f(x)=1$.
If $f$ is equivalent to $\mbox{OR}_n$ up to some negations of the variables, then there exists just one $u \in\{0,1\}^n$ such that $f(u)=0$.

     According to Lemma \ref{lm-7}, we have $f_{x_2=0}(x)=\mbox{OR}(x_1, \neg x_3, (\neg) x_4)$, and $f_{x_3=0}(x)=\mbox{OR}(\neg x_1,\linebreak[0] \neg x_2, (\neg) x_4)$. See Case 3 in Table  \ref{T3} for values of $f(x)$. It is easy to see that if $x_1\oplus x_2=1$, then $f(x)=1$. If $x_1\oplus x_2=0$, then $x_1=x_2$ and $f$ can be represented as a $3$-bit Boolean function $g(x_2,x_3,x_4)$, see Table \ref{T4} for its values. Since  $f_{x_1=0}(x)=\mbox{OR}( x_2, \neg x_3, (\neg) x_4)$, we have either $g(010)=f(0010)=0$ or $g(011)=f(0011)=0$. Since $f_{x_3=0}(x)=\mbox{OR}(\neg x_1, \neg x_2, (\neg) x_4)$, we have either $g(100)=f(1100)=0$ or $g(101)=f(1101)=0$. We also have $g(000)=f(0000)$ and $g(001)=f(0001)=1$. According to Lemma \ref{lm-8}, $g(x_2,x_3,x_4)$ is not isomorphic to $\mbox{AND}_3$ and $Q_E(g)<3$.

     \begin{table}
\caption{Values of  $g(x_2,x_3,x_4)$.}\label{T4}
\begin{center}

\begin{tabular}{|l|l|l|l|l|l|l|}
  \hline
 $x_2$& $x_3$ &$x_4$ &  $g(x_2,x_3,x_4)$ \\
    \hline
   0& 0& 0       & 1\\
    \hline
   0& 0& 1       & 1\\
    \hline
   0& 1& 0       & *\\
    \hline
   0& 1& 1       &*\\
    \hline
   1& 0& 0       &*\\
    \hline
   1& 0& 1     &*\\
    \hline
   1& 1& 0       &*\\
    \hline
  1& 1& 1      &*\\
    \hline
\end{tabular}

\end{center}
\end{table}

     Now we give an exact quantum algorithm for $f$ as follows:
    \begin{enumerate}
      \item [1)] Evaluate  $x_1\oplus x_2$ with one query.
      \item [2)] If $x_1\oplus x_2=1$, then $f(x)=1$.
      \item [3)] If $x_1\oplus x_2=0$, then $f(x)=g(x_2,x_3,x_4)$. Evaluate $g$ with exact quantum  algorithm.
    \end{enumerate}
Therefore, we have $Q_E(f)<1+ Q_E(g)<1+3=4.$ The theorem has been proved.

Finally, we prove the most general case.  The main idea of the proof is similar to  the proof of the previous theorem.

\vspace*{12pt}
\noindent
 \begin{theorem}
 If $f$ is an $n$-bit Boolean function, then $Q_E(f)=n$ iff $f$ is isomorphic to $\mbox{AND}_n$.
\end{theorem}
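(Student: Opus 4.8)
The plan is to run an induction on $n$, using the $n=3$ base case of Montanaro et al.~\cite{MJM11} (and the explicitly worked $n=4$ case in Theorem~\ref{Th6}) and reducing the inductive step to the structural information provided by Lemma~\ref{Lm-c7-1}. Assume the result holds for $n-1$, i.e.\ $\mbox{AND}_{n-1}$ is the unique $(n-1)$-bit function up to isomorphism with exact quantum query complexity $n-1$; this is precisely the hypothesis of Lemma~\ref{Lm-c7-1}. Let $f$ be an $n$-bit Boolean function with $Q_E(f)=n$. By Lemma~\ref{Lm-c7-1} there is a unique $b=b_1\cdots b_n$ such that for every $i$, the restriction $f_{x_i=b_i}$ is equivalent to $\mbox{AND}_{n-1}$ (or to $\mbox{OR}_{n-1}$, consistently for all $i$) up to negations of the variables. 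After negating the input variables $x_i$ with $b_i=1$ — which does not change $Q_E$ — we may assume $b=0\cdots0$ and, by the OR/AND duality $\neg\mbox{OR}_n(\neg x)=\mbox{AND}_n(x)$, that every $f_{x_i=0}$ is equivalent to $\mbox{OR}_{n-1}$ up to negations of variables. The goal is then to show $f=\mbox{OR}_n(x_1,\dots,x_n)$, which is isomorphic to $\mbox{AND}_n$.

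Next I would split into the same three cases as in Theorem~\ref{Th6}, according to the negation patterns appearing across the restrictions $f_{x_i=0}$. In Case~1, no negations occur in any $f_{x_i=0}$, so $f$ agrees with $\mbox{OR}_n$ on every input except possibly $x=1\cdots1$; if $f(1\cdots1)=1$ then $f=\mbox{OR}_n$, and if $f(1\cdots1)=0$ then $f=\mbox{NAE}_n$, which has $Q_E\le n-1$, contradiction. In Case~2, every variable is negated in every $f_{x_i=0}$, forcing $f$ to be (up to output negation) one of $\mbox{Th}_n^{n-1}$ or $\mbox{EXACT}_n^{n-1}$; by the formulas $Q_E(\mbox{Th}_n^{n-1})=Q_E(\mbox{EXACT}_n^{n-1})=n-1<n$ from \cite{AISJ13}, again a contradiction. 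Case~3 is the substantive one: some $f_{x_i=0}$, say $f_{x_1=0}$, contains both a positive and a negated literal, so WLOG $f_{x_1=0}(x)=\mbox{OR}_{n-1}(x_2,\neg x_3,(\neg)x_4,\ldots,(\neg)x_n)$. Here I would invoke Lemma~\ref{lm-7} to propagate this pattern: $f_{x_2=0}(x)=\mbox{OR}_{n-1}(x_1,\neg x_3,(\neg)x_4,\ldots)$ and $f_{x_3=0}(x)=\mbox{OR}_{n-1}(\neg x_1,\neg x_2,(\neg)x_4,\ldots)$. From these three restrictions one reads off that whenever $x_1\oplus x_2=1$ we get $f(x)=1$ (since $x_1\neq x_2$ makes one of the positive literals $x_1$ in $f_{x_2=0}$ or $x_2$ in $f_{x_1=0}$ active, pushing the OR to $1$), while on the slice $x_1=x_2$ the function collapses to an $(n-1)$-variable function $g(x_2,x_3,\ldots,x_n)$.

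The key step then is to show $Q_E(g)<n-1$, which yields the algorithm: query $x_1\oplus x_2$ with one query; if it is $1$ output $1$; otherwise evaluate $g$ exactly, giving $Q_E(f)\le 1+Q_E(g)<n$ and the desired contradiction. To bound $Q_E(g)$ I would argue, exactly as in the $n=4$ proof, that $g$ is not isomorphic to $\mbox{AND}_{n-1}$ and hence by the inductive hypothesis $Q_E(g)<n-1$: from $f_{x_1=0}(x)=\mbox{OR}_{n-1}(x_2,\neg x_3,(\neg)x_4,\ldots)$ there is an input with $g=0$ having $x_2=0,x_3=1$, and from $f_{x_3=0}(x)=\mbox{OR}_{n-1}(\neg x_1,\neg x_2,(\neg)x_4,\ldots)$ there is an input with $g=0$ having $x_2=1,x_3=0$; together with the fact that $g$ takes the value $1$ on at least two inputs (for instance the all-but-$x_2$-type patterns coming from the remaining positive literals, or directly $g(0,\ldots)=f(0\cdots0\ast\cdots)$ values forced to be $1$), Lemma~\ref{lm-8} applies — $g$ has at least two $1$-inputs and at least two $0$-inputs, so it is not isomorphic to $\mbox{AND}_{n-1}$. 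Hence $Q_E(g)<n-1$.

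The main obstacle is making Case~3 fully rigorous for general $n$ rather than just $n=4$: one must verify that Lemma~\ref{lm-7}'s conclusion about $f_{x_1=0},f_{x_2=0},f_{x_3=0}$ really does force $f$ to be constant $1$ off the slice $x_1=x_2$ and to reduce to a genuine $(n-1)$-bit function $g$ on that slice, and that the two witnessing $0$-inputs and two witnessing $1$-inputs for $g$ supplied above are genuinely distinct as inputs to $g$ for all $n>3$ (this is where the hypothesis $n>3$ is used, as flagged at the start of Section~6). Everything else — the reduction via Lemma~\ref{Lm-c7-1}, the normalization to $b=0\cdots0$ with all restrictions being $\mbox{OR}_{n-1}$, and the disposal of Cases~1 and~2 via the known complexities of $\mbox{NAE}_n$, $\mbox{Th}_n^{n-1}$, $\mbox{EXACT}_n^{n-1}$ — is a routine adaptation of the $n=4$ argument, and the final one-query-plus-recursion algorithm is identical in form to the one in Theorem~\ref{Th6}. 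Thus $Q_E(f)=n$ forces $f$ isomorphic to $\mbox{AND}_n$, completing the induction; the converse is \cite{BBC+98}.
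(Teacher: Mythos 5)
Your proposal is correct and follows essentially the same route as the paper: induction on $n$ with the structural Lemma~\ref{Lm-c7-1}, normalization to $b=0\cdots0$ with all restrictions $f_{x_i=0}$ being $\mbox{OR}_{n-1}$ up to negations, the same three-case split on negation patterns (disposing of Cases~1 and~2 via $\mbox{NAE}_n$, $\neg\mbox{Th}_n^{n-1}$, $\neg\mbox{EXACT}_n^{n-1}$), and in Case~3 the same use of Lemmas~\ref{lm-7} and~\ref{lm-8} to reduce to an $(n-1)$-bit function $g$ on the slice $x_1=x_2$, followed by the one-query-plus-$g$ algorithm. The details you flag as the "main obstacle" (the two $0$-inputs and two $1$-inputs of $g$ and the constancy of $f$ off the slice) are handled in the paper exactly as you sketch them.
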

\vspace*{12pt}
\noindent
{\bf Proof:}
If $f$ is isomorphic to $\mbox{AND}_n$, then $Q_E(f)=n$ \cite{BBC+98}.
We prove the other direction by an induction on $n$.

{\bf BASIS}: The result holds for $n=3$.

{\bf INDUCTION}: Suppose the result holds for $n-1$, we will prove that the result holds for $n$.
According to Lemma \ref{Lm-c7-1},
there exists  one and only one $b=b_1\ldots b_n$ for every $i\in[n]$ such that $f_{x_i=b_i}$ is   equivalent to  $\mbox{AND}_{n-1}$ ($\mbox{OR}_{n-1}$) up  to some negations of the variables.
Without loss of generality, we assume that $b=0\ldots0$ and for every $i\in[n]$ such that $f_{x_i=0}$ is equivalent to  $\mbox{OR}_{n-1}$ up  to some negations of the variables.

There are three cases that we need to consider:

  {\bf Case 1:} For every $i\in[n]$, there is no negation variable occurrence in  $f_{x_i=0}$, that is $f_{x_i=0}(x)=\mbox{OR}_{n-1}(x_1,\ldots,x_{i-1},x_{i+1},\ldots, x_n)$ for $i\in[n]$.
   It is easy to see that in such a case   $f(0\ldots0)=0$, $f(1\ldots1)=*$ and $f(x)=1$ for $x\not\in\{0\ldots0,1\ldots 1\}$. If  $f(1\ldots1)=1$, then $f(x)=\mbox{OR}_n(x_1,\ldots,x_n)$,   which is isomorphic to  $\mbox{AND}_n$.  If $f(1\ldots1)=0$, then $f(x)=\mbox{NAE}(x_1,\ldots,x_n)$ and $Q_E(f)<n$.

     {\bf Case 2:} There are all negation variable occurrences in every  $f_{x_i=0}$, that is $f_{x_i=0}(x)=\mbox{OR}_{n-1}(\neg x_1,\ldots,\linebreak[0]\neg x_{i-1},\neg x_{i+1},\ldots, \neg x_n)$ for $i\in[n]$. It is easy to see that $f(x)=1$ for $|x|<n-1$, $f(x)=0$ for $|x|=n-1$    and $f(x)=*$ for $|x|=n$.
      If $f(1\ldots1)=1$, then $f(x)= \neg \mbox{Th}_n^{n-1}$ and $Q_E(f)=n-1<n$.
       If $f(1\ldots1)=0$,  then $f(x)= \neg \mbox{EXACT}_n^{n-1}$ and $Q_E(f)=n-1<n$.

  {\bf Case 3:} There is an $i\in[n]$ such that there is at least one negation variable occurrence and  one  no negation variable occurrence  $f_{x_i=0}$.   Without loss of generality, we assume that $f_{x_1=0}(x)=\mbox{OR}( x_2, \neg x_3, \linebreak[0](\neg) x_4,\ldots)$. According to Lemma \ref{lm-7}, we have $f_{x_2=0}(x)=\mbox{OR}(x_1, \neg x_3,\linebreak[0] (\neg) x_4, \ldots)$ and $f_{x_3=0}(x)=\mbox{OR}(\neg x_1, \neg x_2, \linebreak[0] (\neg) x_4, \ldots)$.
  For any $y\in\{0,1\}^{n-2}$, $f(01y)=f(10y)=1$, that is  $f(x)=1$ if $x_1\oplus x_2=1$.  If $x_1\oplus x_2=0$, then $x_1=x_2$ and $f$ can be represented as an $(n-1)$-bit Boolean function $g(x_2,\ldots,x_n)$. Since   $f_{x_1=0}(x)=\mbox{OR}( x_2, \neg x_3, (\neg) x_4,\ldots)$, there must exist a $u\in\{0,1\}^{n-3}$ such that $f(001u)=g(01u)=0$.
   Since $f_{x_3=0}(x)=\mbox{OR}(\neg x_1, \neg x_2, (\neg) x_4, \ldots)$, there must exist a $v\in\{0,1\}^{n-3}$ such that $f(110v)=g(10v)=0$.
   We also have $g(00\ldots00)=f(000\ldots00)=1$ and $g(00\ldots01)=f(000\ldots01)=1$.
     According to Lemma \ref{lm-8}, we have  that $g(x_2,\ldots,x_n)$ is not isomorphic to $\mbox{AND}_{n-1}$ and $Q_E(g)<n-1$.

       Now we give an exact quantum algorithm for $f$ as follows:
    \begin{enumerate}
      \item [1)] Evaluate  $x_1\oplus x_2$ with one query.
      \item [2)] If $x_1\oplus x_2=1$, then $f(x)=1$.
      \item [3)] If $x_1\oplus x_2=0$, then $f(x)=g(x_2,\ldots,x_n)$. Evaluate $g$ with exact quantum algorithm.
    \end{enumerate}
Therefore, we have $Q_E(f)<1+ Q_E(g)<1+n-1=n.$ The theorem has been proved.

\vspace*{12pt}
\noindent
\begin{corollary}
Almost all $n$-bit Boolean functions can be computed by an exact quantum algorithm with less than $n$ queries.
\end{corollary}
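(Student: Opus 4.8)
The plan is to derive the corollary directly from the main theorem, which asserts that an $n$-bit Boolean function $f$ satisfies $Q_E(f)=n$ if and only if $f$ is isomorphic to $\mbox{AND}_n$. Hence the only $n$-bit Boolean functions that cannot be computed with fewer than $n$ exact quantum queries are exactly those lying in the isomorphism class of $\mbox{AND}_n$; every other Boolean function has exact quantum query complexity at most $n-1$.

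The next step is to count the functions in that class. Recall that query-isomorphism (NPN-equivalence) allows permutations of the $n$ input variables, negations of some subset of the input variables, and a possible negation of the output. There are at most $n!$ permutations, at most $2^n$ choices of input negations, and at most $2$ output negations, so the isomorphism class of $\mbox{AND}_n$ contains at most $2^{n+1}\,n!$ functions. An exact count is not needed; any bound of the form $2^{O(n\log n)}$ will do.

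Then I would compare this with the total number $2^{2^n}$ of $n$-bit Boolean functions. The fraction of $n$-bit Boolean functions that require $n$ exact quantum queries is therefore at most
\begin{equation}
\frac{2^{n+1}\,n!}{2^{2^n}} \le \frac{2^{n+1}\,n^{n}}{2^{2^n}},
\end{equation}
which tends to $0$ as $n\to\infty$, since the numerator grows only like $2^{O(n\log n)}$ while the denominator grows doubly exponentially. Consequently almost all $n$-bit Boolean functions can be computed by an exact quantum algorithm using at most $n-1$ queries, which is exactly the asserted statement.

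The argument has essentially no hard step: all the substance is already contained in the main theorem, and what remains is a crude counting estimate. The only points deserving a little care are pinning down the precise meaning of ``almost all'' (the limiting density, over all $2^{2^n}$ functions, of those requiring $n$ queries goes to $0$) and writing a clean upper bound for the cardinality of a single NPN-equivalence class; both are routine.
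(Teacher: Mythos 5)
Your proposal is correct and follows essentially the same route as the paper: invoke the main theorem and then bound the size of the NPN-class of $\mbox{AND}_n$ against the $2^{2^n}$ total functions. The only difference is cosmetic --- the paper uses the exact count $2\cdot 2^n$ (permutations contribute nothing since $\mbox{AND}_n$ is symmetric), while your cruder bound $2^{n+1}\,n!$ works just as well.
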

\vspace*{12pt}
\noindent
{\bf Proof:}
It is easy to see that there are $2\times 2^n$ $n$-bit Boolean functions which are isomorphic to $\mbox{AND}_n$.   Since there are $2^{2^n}$ Boolean
functions on $n$ variables, we see that the fraction of  functions which have exact quantum query complexity $n$  is $o(1)$. Thus almost all $n$-bit Boolean functions can be computed by an exact quantum algorithm with less than $n$ queries.

\section{Conclusion}

We have first shown that  $\mbox{AND}_n$ is the only $n$-bit Boolean function in three special classes of Boolean functions, (including symmetric, monotone, read-once  functions),  up to isomorphism, that has exact quantum query complexity $n$. Finally, we have proved that  in general  $\mbox{AND}_n$ is the only  $n$-bit Boolean function, up to isomorphism, that has exact quantum query complexity $n$.
This shows that the advantages for exact quantum query algorithms are more common than previously thought.

 In the proof for special classes of Boolean functions, we have used their special properties of different types of Boolean functions.
  Each approach is different from each other. These approaches that we used in each type of  Boolean functions  may be helpful in analysis of exact quantum complexity for other interesting functions.
In the approach for general case, we have used the properties of the true value table of the Boolean functions.

\section*{Acknowledgements}
The authors are thankful to the anonymous referees   for their comments and suggestions on the early version of this paper.
The third author would like to thank
Alexander Rivosh for his help while visiting  University of Latvia.
Work of the first author
was supported by FP7 FET projects QCS and QALGO and ERC Advanced Grant MQC (at the University of Latvia) and by National Science Foundation under agreement No. DMS-1128155 (at IAS, Princeton). Any opinions, findings and conclusions or recommendations expressed in this material are those of the author(s) and do not necessarily reflect the views of the National Science Foundation.
Work of the second and third authors was supported by
the Employment of Newly Graduated Doctors of Science for Scientific Excellence project/grant (CZ.1.07./2.3.00/30.0009)  of Czech Republic.

\end{document}